\documentclass[11pt]{article}
\usepackage[english]{babel}
\usepackage{verbatim}
\usepackage{authblk}
\usepackage{amsmath, amsthm, amssymb, graphics, graphicx}
\DeclareMathOperator{\amp}{amp}

\newcommand{\mathsym}[1]{{}}
\newtheorem{theorem}{Theorem}[section]
\newtheorem{lemma}{Lemma}[section]
\newtheorem{claim}{Claim}[section]

\newtheorem{conjecture}{Conjecture}[section]

\title{\bf\Large{{Chromatic number of ISK4-free graphs}}}
\author{Ngoc Khang Le 
\thanks{This work was performed within the framework of the LABEX MILYON (ANR-10-LABX-0070) of Universit\'e de Lyon, within the program ``Investissements d'Avenir'' (ANR-11- IDEX-0007) operated by the French National Research Agency (ANR). Partially supported by ANR project Stint under reference ANR-13-BS02-0007.}
}
\affil{LIP, ENS de Lyon, Lyon, France}

\begin{document}
\maketitle

\begin{abstract}
A graph $G$ is said to be ISK4-free if it does not contain any subdivision of $K_4$ as an induced subgraph. In this paper, we propose new upper bounds for chromatic number of  ISK4-free graphs and $\{$ISK4, triangle$\}$-free graphs. 
\end{abstract}

\section{Introduction}

We say that a graph $G$ is \emph{$H$-free} if $G$ does not contain any induced subgraph isomorphic to $H$. For $n\geq 1$, denote by $K_n$ the complete graph on $n$ vertices. A \textit{subdivision} of a graph $G$ is obtained by subdividing its edges into paths of arbitrary length (at least one). We say that $H$ is \textit{an ISK4 of a graph} $G$ if $H$ is an induced subgraph of $G$ and $H$ is a subdivision of $K_4$. A graph that does not contain any induced subdivision of $K_4$ is said to be \textit{ISK4-free}. For instance, series-parallel graphs and line graph of cubic graphs are ISK4-free (see \cite{LMT12}). A \textit{triangle} is a graph isomorphic to $K_3$. 

The \textit{chromatic number} of a graph $G$, denoted by $\chi(G)$, is the smallest integer $k$ such that $G$ can be partitioned into $k$ stable sets. Denote by $\omega(G)$ the size of a largest clique in $G$. A class of graphs $\cal G$ is \textit{$\chi$-bounded} with \textit{$\chi$-bounding function $f$} if, for every graph $G\in {\cal G}$, $\chi(G)\leq f(\omega(G))$. This concept was introduced by Gy\'arf\'as \cite{G87} as a natural extension of perfect graphs, that form a $\chi$-bounded class of graphs with $\chi$-bounding function $f(x)=x$. The question is: which induced subgraphs need to be forbidden to get a $\chi$-bounded class of graphs? One way to forbid induced structures is the following: fix a graph $H$, and forbid every induced subdivision of $H$. We denote by \emph{Forb$^*(H)$} the class of graphs that does not contain any induced subdivision of $H$. The class Forb$^*(H)$ has been proved to be $\chi$-bounded for a number of graph $H$. Scott \cite{S97} proved that for any forest $F$, Forb$^*(F)$ is $\chi$-bounded. In the same paper, he conjectured that Forb$^*(H)$ is $\chi$-bounded for any graph $H$. Unfortunately, this conjecture has been disproved (see \cite{PK14}). However, there is no general conjecture on which graph $H$, Forb$^*(H)$ is $\chi$-bounded. This question is discussed in \cite{CELM14}. We focus on the question when $H=K_4$. In this case, Forb$^*(K_4)$ is the class of ISK4-free graphs. Since $K_4$ is forbidden, proving that the class of ISK4-free graphs is $\chi$-bounded is equivalent to proving that there exists a constant $c$ such that for every ISK4-free graph $G$, $\chi(G)\leq c$. Remark that the existence of such constant was pointed out in \cite{LMT12} as a consequence of a result in \cite{KO04}, but it is rather large ($\geq 2^{2^{2^{25}}}$) and very far from these two conjectures:

\begin{conjecture}[L{\'e}v{\^e}que, Maffray, Trotignon 2012 \cite{LMT12}] \label{conj:1}
If $G$ is an ISK4-free graph, then $\chi(G)\leq 4$.
\end{conjecture}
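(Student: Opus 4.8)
The plan is to proceed by induction on $|V(G)|$, driven by a structural decomposition theorem for ISK4-free graphs. The natural starting point is the decomposition theorem of L\'ev\^eque, Maffray and Trotignon \cite{LMT12}, which asserts (roughly) that every ISK4-free graph is either series-parallel, or the line graph of a graph with maximum degree at most three, or belongs to a small list of basic graphs, or else admits a clique cutset or a proper $2$-cutset. Taking a minimal counterexample $G$ to $\chi(G)\le 4$, I would show that $G$ can be neither a basic graph nor a graph admitting either kind of cutset, forcing a contradiction.

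First I would dispose of the basic classes directly, since these carry the arithmetic that pins the bound at $4$. A series-parallel graph has no $K_4$-minor, hence treewidth at most two, and is therefore $3$-colourable. For a line graph $L(H)$ with $\Delta(H)\le 3$, the chromatic number equals the edge-chromatic number $\chi'(H)$, and Vizing's theorem gives $\chi'(H)\le \Delta(H)+1\le 4$; so $\chi(L(H))\le 4$. This is exactly where the line graph of a class-two cubic graph (the Petersen graph, say) forces the conjectured bound to be $4$ rather than $3$, so no improvement below $4$ is possible and the target constant is correct.

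The clique-cutset case is routine and should be handled next. If $K$ is a clique cutset splitting $G$ into blocks $G_1$ and $G_2$, each containing $K$, then by induction each $G_i$ is $4$-colourable; the vertices of $K$ receive distinct colours in each block, so after permuting the colours of one block to agree on $K$ the two colourings glue into a proper $4$-colouring of $G$. Hence a minimal counterexample admits no clique cutset, and we are reduced to the proper $2$-cutset.

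The main obstacle is precisely the proper $2$-cutset $\{a,b\}$, with $a,b$ non-adjacent and $G\setminus\{a,b\}$ splitting into blocks $G_1,\dots,G_k$ each containing $a$ and $b$. The difficulty is that a $4$-colouring of a block either makes $a,b$ equal or distinct, and these local constraints must be reconciled across all blocks with a shared palette of only four colours. My plan is to record, for each block, whether it admits a $4$-colouring with $a=b$ and whether it admits one with $a\ne b$, encode this by replacing each block with a short gadget (an edge $ab$, or a path of controlled parity), verify that the resulting auxiliary graph is again ISK4-free and strictly smaller, apply the induction hypothesis, and lift the colouring back. The crux --- and the reason this remains a conjecture rather than a theorem --- is proving that no block can \emph{simultaneously} force $a$ and $b$ equal and force them distinct in an irreconcilable way; controlling the parity of the internal paths through a block while preserving ISK4-freeness is where I expect the real work to lie, and where an additional structural lemma (or, failing that, a slightly larger palette yielding a weaker constant) may well be required.
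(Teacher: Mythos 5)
You are attempting to prove a statement that the paper records as an open \emph{conjecture}, not a theorem: the paper never proves $\chi(G)\le 4$ for ISK4-free graphs, and its actual result in this direction is the much weaker bound $\chi(G)\le 24$ (Theorem \ref{Thm:2}), obtained by a different method entirely (layering by distance from a vertex via Lemma \ref{Lm:chromatic-layer}, combined with the partial decompositions of Lemmas \ref{Lm:K33} and \ref{Lm:prism,K222}). So your plan would have to be a new proof, and it has a genuine gap at its very first step.

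The gap is the decomposition theorem you invoke. L\'ev\^eque, Maffray and Trotignon do \emph{not} prove that every non-basic ISK4-free graph has a clique cutset or a proper $2$-cutset; their general decomposition has an additional outcome, a \emph{double star cutset}, and there is no known way to glue colourings across such a cutset, which is precisely why Conjecture \ref{conj:1} is still open. The clean ``basic or clique cutset or proper $2$-cutset'' statement holds only for $\{$ISK4, wheel$\}$-free graphs, which is how Theorem \ref{Thm:wheel-free} is proved. As the present paper explains, the good cutsets are only guaranteed in the presence of a $K_{3,3}$, a prism or a rich square; what remains is the class of $\{$ISK4, $K_{3,3}$, prism, $K_{2,2,2}\}$-free graphs, for which no useful decomposition is known and for which the proposed (still open) substitute is the minimum-degree statement of Conjecture \ref{conj:3}. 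Ironically, the step you single out as the crux --- reconciling the blocks of a proper $2$-cutset $\{a,b\}$ --- is the one step that is actually unproblematic: since $G$ is $2$-connected there is an induced $a$--$b$ path through the other block, so each block together with the edge $ab$ is ISK4-free by Lemma \ref{Lm:reducing}; that edge forces $a$ and $b$ to receive distinct colours on both sides, and the two colourings glue after permuting colours, with no parity gadget and no case distinction between ``$a=b$'' and ``$a\ne b$'' (this is exactly how the proper $2$-cutset is handled in the proof of Theorem \ref{Thm:2}). Your treatment of the basic classes and of clique cutsets is correct, but the double-star-cutset outcome leaves the induction with nowhere to go.
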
 

\begin{conjecture}[Trotignon, Vu{\v{s}}kovi{\'c} 2016 \cite{TV16}] \label{conj:2}
If $G$ is an $\{$ISK4, triangle$\}$-free graph, then $\chi(G)\leq 3$.
\end{conjecture}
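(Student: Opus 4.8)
The plan is to prove this by structural induction, building on the decomposition theorem for $\{$ISK4, triangle$\}$-free graphs due to Trotignon and Vu\v{s}kovi\'c \cite{TV16}: every such graph is either series-parallel, or a complete bipartite graph, or admits a clique cutset, or admits a proper $2$-cutset. I would induct on $|V(G)|$ and, in each case of the decomposition, produce a proper $3$-coloring.

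The two basic classes are immediate. A series-parallel graph has no $K_4$-minor, hence has treewidth at most $2$, is therefore $2$-degenerate, and so is $3$-colorable; a complete bipartite graph is $2$-colorable. These form the base of the induction.

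For the clique-cutset case, let $K$ be a clique cutset separating $G$ into proper induced subgraphs $G_1$ and $G_2$ with $G_1\cap G_2=K$. Since $G$ is triangle-free, $|K|\le 2$. By induction each $G_i$ has a $3$-coloring, and because $K$ has at most two vertices while three colors are available, I can permute the colors of one coloring so that the two agree on $K$; gluing then yields a $3$-coloring of $G$.

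The main difficulty is the proper $2$-cutset. Here $G$ is split by a pair of non-adjacent vertices $a,b$ into sides $X_1,X_2$ with no edges between them, and the two resulting blocks must be colored consistently at the pair $(a,b)$. The only data that matters for the gluing is the single bit ``$a$ and $b$ get the same color, or not'', so the aim is to control this bit on each side. I would replace the opposite side of each block by a short marker path between $a$ and $b$, whose length (parity) is chosen to imitate the connectivity of the removed side, then check that each augmented block is still $\{$ISK4, triangle$\}$-free and strictly smaller, and apply induction. The crux of the argument, and what I expect to be the genuine obstacle, is twofold: first, verifying that inserting the marker creates neither a triangle nor an induced subdivision of $K_4$, which constrains the admissible marker lengths and uses the precise definition of a proper $2$-cutset; and second, showing that the two augmented blocks always admit $3$-colorings that agree on the same/different pattern at $(a,b)$, after recoloring if necessary, so that they combine into a $3$-coloring of $G$. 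Matching this parity across the two sides is the delicate point on which the whole induction rests.
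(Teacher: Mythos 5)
This statement is one of the paper's stated \emph{conjectures}, not one of its theorems: the paper explicitly says that no bound better than the one it proves is known for $\{$ISK4, triangle$\}$-free graphs, and what it actually establishes is the weaker bound $\chi(G)\leq 4$ (Theorem \ref{Thm:1}), by an entirely different method (showing via Lemmas \ref{Lm:attach-hole} and \ref{Lm:hole} that each distance layer $N_i(u)$ induces a forest and then applying Lemma \ref{Lm:chromatic-layer}). So there is no proof in the paper for you to match; a correct argument here would resolve an open problem.

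Your proposal has a genuine gap, in two places. First, the decomposition theorem you attribute to \cite{TV16} --- that every $\{$ISK4, triangle$\}$-free graph is series-parallel, or complete bipartite, or has a clique cutset, or has a proper $2$-cutset --- is not available. The decomposition results that are actually on record (Lemmas \ref{Lm:K33} and \ref{Lm:prism,K222} here, from \cite{LMT12}) only apply when the graph contains a $K_{3,3}$, a prism, or a rich square; for $\{$ISK4, triangle, $K_{3,3}\}$-free graphs with no clique cutset and no proper $2$-cutset there is no known reduction to a basic class, and this is exactly why Conjectures \ref{conj:2} and \ref{conj:4} are open (the hard case is girth $4$, i.e.\ graphs containing squares, which is precisely what the girth-$5$ result, Theorem \ref{Thm:girth5}, excludes). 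Second, even granting such a decomposition, your treatment of the proper $2$-cutset is not a proof: you correctly identify that the whole induction rests on matching the ``$a$ and $b$ same colour or not'' bit across the two blocks, but you give no argument that this is always possible. With only three colours it is entirely possible a priori that one block forces $a$ and $b$ to receive the same colour in every $3$-colouring while the other forces them to differ, and ruling this out (e.g.\ by showing that the block with the edge $ab$ added, or with a marker path of the right parity, stays in the class and is smaller) is the substance of the case, not a detail. Deferring it as ``the delicate point'' leaves the induction unsupported. Note that this difficulty is why the paper only uses proper $2$-cutsets in the proof of Theorem \ref{Thm:2}, where $24$ colours give ample slack, and avoids them entirely in the proof of Theorem \ref{Thm:1}.
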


No better upper bound is known even for the chromatic number of $\{$ISK4, triangle$\}$-free graphs. However, attempts were made toward these two conjectures. A \textit{hole} is an induced cycle on at least four vertices. For $n\geq 4$, we denote by $C_n$ the hole on $n$ vertices. A \textit{wheel} is a graph consisting of a hole $H$ and a vertex $x\notin H$ which is adjacent to at least three vertices on $H$. The \textit{girth} of a graph is the length of its smallest cycle. The optimal bound is known for the chromatic number of $\{$ISK4, wheel$\}$-free graphs and $\{$ISK4, triangle, $C_4\}$-free graphs:  

\begin{theorem}[L{\'e}v{\^e}que, Maffray, Trotignon 2012 \cite{LMT12}] \label{Thm:wheel-free}
Every $\{$ISK4, wheel$\}$-free graph is $3$-colorable.
\end{theorem}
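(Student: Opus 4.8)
The plan is to proceed by induction on $|V(G)|$, combining a structural decomposition theorem with the elementary fact that $\omega(G)\le 3$: since $K_4$ is itself an ISK4, no ISK4-free graph contains $K_4$, so every clique has at most three vertices. The skeleton is standard for $\chi$-bounding results obtained through decomposition. First I would dispose of the graphs admitting a cutset, reducing to graphs with no cutset; then I would show that every such indecomposable $\{$ISK4, wheel$\}$-free graph lies in a short list of basic classes, each of which is readily $3$-colorable.

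First I would handle \emph{clique cutsets}. If $G$ has a clique cutset $K$, write $G$ as the union of $G_1$ and $G_2$ overlapping exactly in $K$, where each $G_i$ is a proper induced subgraph and hence $\{$ISK4, wheel$\}$-free. By induction each $G_i$ is $3$-colorable, and because $|K|\le\omega(G)\le 3$ one can permute colors so that the two colorings agree on $K$ and then glue them, yielding a proper $3$-coloring of $G$. The same inductive template applies to a \emph{proper $2$-cutset} $\{a,b\}$ with $a,b$ nonadjacent: I would split $G$ along the cutset, attaching to each side a short \emph{marker} (a path of length two between $a$ and $b$) so that the two pieces remain $\{$ISK4, wheel$\}$-free and strictly smaller, color each by induction, and recombine. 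The delicate point is to verify that the marker faithfully records the constraint imposed by the other side (namely whether $a$ and $b$ may receive equal colors) so that the two $3$-colorings can be merged; checking that the augmented pieces stay inside the class is the first genuine obstacle.

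It then remains to $3$-color the graphs with no clique cutset and no proper $2$-cutset. The heart of the argument is a structure theorem asserting that such a $\{$ISK4, wheel$\}$-free graph is \emph{series-parallel}, \emph{complete bipartite}, or the \emph{line graph of a graph with maximum degree at most three} (in particular, wheel-freeness excludes the ``rich square'' configuration that appears in the general ISK4-free decomposition). I would prove this by taking a hole $H$, absent only in the series-parallel case, and analyzing vertex by vertex how the rest of $G$ attaches to $H$: wheel-freeness forbids any vertex with three neighbors on $H$, ISK4-freeness forbids the various ``three internally disjoint paths between two branch vertices'' configurations, and together these constrain attachments enough to force either a bipartition or the incidence structure characteristic of a line graph. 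The three basic classes are then $3$-colorable: series-parallel graphs have treewidth at most $2$, complete bipartite graphs are $2$-colorable, and for a line graph $L(H_0)$ with $\Delta(H_0)\le 3$ one has $\chi(L(H_0))=\chi'(H_0)$.

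The main obstacle I anticipate is the line-graph case, since $\chi'(H_0)\le 3$ is \emph{not} automatic for subcubic $H_0$: the Petersen graph is a class-two example. The resolution must invoke wheel-freeness once more, arguing that whenever $L(H_0)$ is $\{$ISK4, wheel$\}$-free, the local obstructions that force $\chi'(H_0)=4$ translate into a wheel or an induced $K_4$-subdivision of $L(H_0)$, so that $H_0$ must in fact be $3$-edge-colorable. Establishing this implication cleanly, together with proving the structure theorem and confirming that the $2$-cutset reduction respects the class, is where the real work of the proof lies.
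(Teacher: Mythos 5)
First, a remark on the comparison you asked me to make: the paper does not prove Theorem~\ref{Thm:wheel-free} at all --- it is imported verbatim from \cite{LMT12} and used as a black box --- so there is no in-paper argument to measure your proposal against. Judged on its own, your outline does reproduce the architecture of the actual proof in \cite{LMT12}: reduce along clique cutsets and proper $2$-cutsets, then show that the remaining graphs are series-parallel, complete bipartite, or line graphs of subcubic graphs, and $3$-color each basic class. You also correctly locate the two real difficulties (the structure theorem and the line-graph case). But as written the proposal has genuine gaps, not just deferred routine work.

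Concretely: (1) Your marker for the proper $2$-cutset is wrong. A path of length two between $a$ and $b$ imposes no constraint on whether $a$ and $b$ receive equal colors, so it records nothing about the other side, and with only three colors a coloring of one block giving $a,b$ the same color cannot be permuted to agree with a coloring of the other block giving them distinct colors. The correct device (used in \cite{LMT12}, and in this paper's own proof of Theorem~\ref{Thm:2}) is to take a path $P_Y$ from $a$ to $b$ with interior in $Y$, observe it is flat in $G|(X\cup P_Y)$, and \emph{reduce} it to the edge $ab$ (Lemma~\ref{Lm:reducing}); the edge forces $a$ and $b$ to get distinct colors in both blocks, which is what makes the gluing possible. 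One must also check that reduction preserves wheel-freeness, which you do not mention. (2) The structure theorem you assert is the entire content of a long paper; "analyzing vertex by vertex how the rest of $G$ attaches to $H$" is a direction, not a proof, and in particular you omit the crucial refinement that in the wheel-free case the line graphs that arise are line graphs of \emph{chordless} subcubic graphs. (3) Because of that omission, your resolution of the line-graph case remains a hope: the statement actually needed is that chordless graphs of maximum degree three are $3$-edge-colorable, which is itself a nontrivial theorem proved by a further decomposition of chordless graphs; "the obstructions forcing $\chi'(H_0)=4$ translate into a wheel or an ISK4" is the right instinct but is nowhere established. So the proposal is a faithful sketch of the known proof's shape, with its three hardest steps left open.
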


\begin{theorem}[Trotignon, Vu{\v{s}}kovi{\'c} 2016 \cite{TV16}] \label{Thm:girth5}
Every ISK4-free graph of girth at least $5$ contains a vertex of degree at most $2$ and is $3$-colorable.
\end{theorem}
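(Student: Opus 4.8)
The statement has two conclusions, and the colouring is an immediate consequence of the degree bound, so I would settle it first by induction on $|V(G)|$. Given an ISK4-free graph $G$ of girth at least $5$, the degree part of the theorem yields a vertex $v$ with $\deg_G(v)\le 2$. Every induced subgraph of an ISK4-free graph is ISK4-free, and deleting a vertex never lowers the girth, so $G-v$ again satisfies the hypotheses and is $3$-colourable by induction; since $v$ has at most two neighbours, at least one colour is free for it, and $G$ is $3$-colourable. Thus the entire content lies in producing a vertex of degree at most $2$.

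For this I would argue by contradiction, assuming every vertex of $G$ has degree at least $3$. It is classical that such a graph contains a subdivision of $K_4$ as a (not necessarily induced) subgraph: the graphs with no $K_4$-subdivision are exactly the graphs with no $K_4$-minor, namely the series-parallel graphs, and each of those has a vertex of degree at most $2$. Fix a subdivision $H$ of $K_4$ in $G$, with branch vertices $a_1,a_2,a_3,a_4$ and connecting paths $P_{ij}$, chosen with the fewest vertices. If $H$ were induced it would be an induced subdivision of $K_4$, contradicting that $G$ is ISK4-free; so there is an edge $xy$ of $G$ with $x,y\in V(H)$ and $xy\notin E(H)$.

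The core of the argument is the analysis of this chord, split according to where $x$ and $y$ sit. If $x,y$ lie on a common path $P_{ij}$, replacing the subpath between them by the edge $xy$ yields a subdivision of $K_4$ with strictly fewer vertices, contradicting minimality. The instructive case is $x,y$ interior to two ``opposite'' paths, say $x\in P_{12}$ and $y\in P_{34}$: then $H+xy$ is a subdivision of $K_{3,3}$, and promoting $x,y$ to branch vertices lets me re-assemble a subdivision of $K_4$ from the pieces of $H$ and the single edge $xy$, using all of the paths except one of $P_{13},P_{14},P_{23},P_{24}$, which I am free to choose. A length count shows this subdivision is strictly smaller unless all four of those paths are single edges; but then $a_1a_3a_2a_4$ is a $4$-cycle, contradicting girth at least $5$. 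This is exactly where the girth hypothesis enters, and the case of two paths sharing a branch vertex is handled by the same reroute-and-count scheme.

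The step I expect to be the real obstacle is a chord incident to a branch vertex all of whose three incident paths are single edges, i.e.\ $x=a_1$ with $a_1$ adjacent to $a_2,a_3,a_4$ and to some interior vertex $y$ of $P_{23}$. Here the natural reroute produces a subdivision of the \emph{same} size, so plain minimality does not conclude; instead the three long paths form a hole on which $a_1$ is a hub seeing at least three vertices, and one must extract from this wheel an induced subdivision of $K_4$. Because $a_1$ may see four vertices of the hole, no single choice of three spokes is automatically chordless, so this case seems to demand either a refined extremal choice of $H$ (for instance also minimising the number of chords) or a direct structural analysis of the wheel. Checking that the subdivision one finally exhibits is genuinely chordless is where the absence of triangles and $4$-cycles does the essential work, and is the part I would expect to require the most care.
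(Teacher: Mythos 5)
First, note that the paper does not prove this statement at all: Theorem~\ref{Thm:girth5} is quoted from \cite{TV16}, and the text only records that the original proof ``relies on structural decompositions''. So the comparison can only be with that external proof, which proceeds via a global decomposition theorem rather than via your local chord analysis of a minimal topological $K_4$. That said, the parts of your argument that you carry out are sound: the reduction of $3$-colorability to the degree bound is correct and standard; the same-path case gives a strictly smaller reroute; in the opposite-paths case each of the four reroutes through the resulting $K_{3,3}$ discards the interior of one of $P_{13},P_{14},P_{23},P_{24}$, so all four must be edges and $a_1a_3a_2a_4$ is a $C_4$, contradicting the girth; and in the adjacent-paths case the reroute fails to shrink only when both ends of the chord are adjacent to the common branch vertex, which creates a triangle.

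However, the case you flag at the end is a genuine gap, not a detail requiring ``care''. If $a_1$ is joined by edges to $a_2,a_3,a_4$ and has a chord to an interior vertex $y$ of $P_{23}$, then $H+a_1y$ is exactly a wheel: a hole $a_2P_{23}a_3P_{34}a_4P_{24}a_2$ together with a hub $a_1$ having precisely four neighbours $a_2,y,a_3,a_4$ on it. Every $K_4$-subdivision contained in this wheel must take $a_1$ and three of these four neighbours as branch vertices, uses exactly the vertex set of $H$, and carries the fourth neighbour as a chord of one of its connecting arcs; so minimality yields nothing, and such a four-spoke wheel is itself ISK4-free, so no induced subdivision of $K_4$ can be extracted from it alone. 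The girth hypothesis only forces the four arcs to be long; it does not destroy the configuration. To finish, one must leave $H$ entirely (for instance exploit minimum degree $3$ at $y$ and at the rim vertices, or impose a finer extremal choice of $H$), and this is precisely the difficulty that leads \cite{TV16} to prove a full decomposition theorem rather than argue on a minimal topological $K_4$. As written, your proof of the degree bound is incomplete.
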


The proof of Theorems \ref{Thm:wheel-free} and \ref{Thm:girth5} relies on structural decompositions. One way to prove Conjectures \ref{conj:1} and \ref{conj:2} is to find a vertex of small degree. This approach is successfully used in \cite{TV16} to prove Theorem \ref{Thm:girth5}. Two following conjectures will immediately imply the correctness of Conjectures \ref{conj:1} and \ref{conj:2} (definitions of $K_{3,3}$, prism and $K_{2,2,2}$ are given in Section \ref{S:2}) :  

\begin{conjecture} [Trotignon '2015 \cite{T15}] \label{conj:3}
Every $\{$ISK4, $K_{3,3}$, prism, $K_{2,2,2}\}$-free graph contains a vertex of degree at most three.
\end{conjecture}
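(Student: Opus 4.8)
The plan is to imitate the strategy behind Theorem~\ref{Thm:girth5}: take a vertex-minimal counterexample $G$, feed it to the structural decomposition of ISK4-free graphs of \cite{LMT12}, and argue that none of the possible outcomes can have minimum degree at least four. Roughly, that decomposition asserts that a connected ISK4-free graph is either \emph{basic}---series-parallel, complete bipartite, or the line graph of a graph of maximum degree at most three---or else admits a clique cutset or a proper $2$-cutset (together with a bounded list of explicit dense graphs). The purpose of additionally forbidding $K_{3,3}$, the prism and $K_{2,2,2}$ is precisely to prune the \emph{dense} basic graphs and dense decomposition pieces, so that the only survivors are sparse enough to contain a vertex of degree at most three.

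First I would handle the cutset cases. If $G$ has a clique cutset or a proper $2$-cutset, I would select a decomposition block $B$ meeting the cutset in few vertices, invoke minimality on the smaller graph obtained from $B$ by collapsing the cutset into a bounded gadget, and then transfer a low-degree vertex back to $G$: a vertex of small degree lying in the interior of $B$, away from the few attachment vertices, retains its degree in $G$. This is exactly the bookkeeping that makes Theorem~\ref{Thm:girth5} go through, and I expect it to survive here modulo the usual care about markers. The easy basic classes then fall immediately: a series-parallel graph is $2$-degenerate, hence has a vertex of degree at most two; and a complete bipartite graph with no $K_{3,3}$ is a $K_{1,n}$ or a $K_{2,n}$, which again has a vertex of degree at most two.

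The crux---and, I suspect, a fatal obstacle---is the line-graph case. Write $G=L(H)$ with $\Delta(H)\le 3$. The vertex of $L(H)$ coming from an edge $uv$ has degree $\deg_H(u)+\deg_H(v)-2$, so $G$ has minimum degree at least four only if every edge of $H$ joins two degree-three vertices, i.e.\ $H$ is cubic and $L(H)$ is $4$-regular. Now $K_{2,2,2}=L(K_4)$ and the prism $=L(K_{2,3})$, so forbidding them removes exactly those cubic $H$ containing $K_4$ or $K_{2,3}$ as a subgraph, while $L(H)$, being a line graph, is automatically claw-free (no induced $K_{1,3}$) and hence $K_{3,3}$-free. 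The trouble is that these exclusions only kill \emph{small, dense} underlying graphs, and a cubic graph of large girth evades all of them.

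Indeed, a cubic graph of girth at least five has no triangle, no $K_4$ and no $K_{2,3}$, so its line graph is claw-free, $K_4$-free, and diamond-free (no induced $K_4$ minus an edge, since in the line graph of a triangle-free cubic graph no two triangles share an edge). Since every subdivision of $K_4$ other than $K_4$ itself and those obtained by subdividing a single edge contains an induced claw, while the latter contain an induced diamond, such a line graph is ISK4-free; it likewise avoids $K_{3,3}$, the prism and $K_{2,2,2}$, yet it is $4$-regular. The line graph of the Petersen graph is exactly such a graph. I therefore do not expect the reduction to close: the line-graph case does not merely resist the argument but appears to furnish $4$-regular counterexamples, which strongly suggests that Conjecture~\ref{conj:3} as stated is false and that any correct statement of this type must further restrict the underlying cubic graphs (for example by bounding their girth, or by forbidding line graphs of cubic graphs outright) or else abandon the minimum-degree formulation in favour of a direct bound on~$\chi$.
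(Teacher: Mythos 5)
First, a point of context: the statement you were asked to prove is Conjecture~\ref{conj:3} of the paper. It is presented there as an \emph{open conjecture} (attributed to Trotignon) which would imply Conjecture~\ref{conj:1}; the paper neither proves nor disproves it, so there is no proof of the paper's to compare yours against. Your proposal does not prove it either: after a reasonable high-level sketch of the natural attack (minimal counterexample, clique cutsets and proper $2$-cutsets, then the basic classes), it ends by asserting that the conjecture is in fact \emph{false}, with the line graph of the Petersen graph offered as a $4$-regular member of the class. That assertion rests on a concrete error and should be withdrawn.

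The error is the identification of prism-freeness of $L(H)$ with the absence of $K_{2,3}$ subgraphs in $H$ (``the prism $=L(K_{2,3})$, so forbidding them removes exactly those cubic $H$ containing $K_4$ or $K_{2,3}$ as a subgraph''). With the paper's definition a prism is \emph{any} two triangles joined by three disjoint paths of length at least one, and in the line graph of a triangle-free cubic graph $H$ the two triangles of an induced prism are the edge-stars of two non-adjacent vertices $u,v$ of $H$, while the three paths come from three internally disjoint paths of $H$ from $u$ to $v$ whose union is chordless. So an induced prism in $L(H)$ corresponds to an induced \emph{theta} in $H$, not to a $K_{2,3}$ subgraph, and cubic graphs of large girth contain plenty of induced thetas. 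Concretely, in the Petersen graph (vertices the $2$-subsets of $\{1,\dots,5\}$, adjacency being disjointness) take $u=\{1,2\}$, $v=\{1,3\}$ and the three paths $u\,\{4,5\}\,v$, $\ u\,\{3,4\}\,\{2,5\}\,v$, $\ u\,\{3,5\}\,\{2,4\}\,v$: one checks that their union is an induced theta, and the eight edges involved induce a prism in the line graph of the Petersen graph. Hence that graph is not $\{$ISK4, $K_{3,3}$, prism, $K_{2,2,2}\}$-free and is not a counterexample; more generally, forbidding the prism is a far stronger restriction on the underlying cubic graph than you suppose, and it is exactly this restriction that makes the line-graph case of the conjecture plausible rather than hopeless. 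The cutset bookkeeping and the sparse basic classes in your sketch are fine as far as they go, but the line-graph (and rich-square) cases are where all the work lies, and your treatment of them starts from a false premise. The conjecture should be regarded as open, not refuted.
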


\begin{conjecture} [Trotignon, Vu{\v{s}}kovi{\'c} '2016 \cite{TV16}] \label{conj:4}
Every $\{$ISK4, $K_{3,3}$, triangle$\}$-free graph contains a vertex of degree at most two.
\end{conjecture}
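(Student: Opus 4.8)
The natural approach is by contradiction through a minimal counterexample, in the same decomposition spirit already used for Theorems \ref{Thm:wheel-free} and \ref{Thm:girth5}. I would take $G$ to be a $\{$ISK4, $K_{3,3}$, triangle$\}$-free graph with minimum degree at least three and with $|V(G)|$ minimum. Since $G$ is triangle-free, every clique cutset is a single vertex or a single edge, and a routine induction on such cutsets should let me assume that $G$ is $2$-connected and has no clique cutset: if $\{v\}$ or an edge separates $G$ into blocks $G_1,G_2$, then every vertex outside the cutset keeps its degree, so a smallest block lies in the class and, being strictly smaller, contains a vertex of degree at most two by minimality; the only difficulty is to force that vertex to lie outside the cutset, which I would handle by strengthening the induction hypothesis to produce \emph{two} vertices of degree at most two, a property I can verify for the basic graphs below.

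Next I would split on the girth. As $G$ is triangle-free its girth is at least $4$; were it at least $5$, Theorem \ref{Thm:girth5} would immediately give a vertex of degree at most two, a contradiction. Hence $G$ contains an induced $C_4$, and the role of forbidding $K_{3,3}$ is precisely to control these $4$-cycles. I would extract two local facts: (i) because $G$ is triangle-free, every induced complete bipartite subgraph has one side of size at most two, since two sides of size three would form a $K_{3,3}$; and (ii) a $C_4$ together with two internally disjoint ``crossing'' induced paths joining its opposite pairs is an ISK4, and a hole with a vertex seeing exactly three of its vertices is an ISK4 as well (triangle-freeness forces the three arcs to have length at least two, so the configuration is induced).

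These facts let me analyse how the rest of $G$ attaches to a maximal biclique $B=K_{2,t}$ sitting on a $C_4$. Applying (ii) repeatedly, I would show that the neighbourhood of the large side of $B$ is trapped in a bounded interface, so that either $G$ is itself complete bipartite (hence, by (i), a $K_{1,t}$ or $K_{2,t}$, which has vertices of degree at most two), or $G$ is series-parallel (which always has a vertex of degree at most two, having no $K_4$-minor), or the two vertices of the small side of $B$ form a \emph{proper $2$-cutset} of $G$. In the last case I would pass to an extreme block, replace the remainder of the graph by a short induced marker path between the two cutset vertices chosen to keep the block $\{$ISK4, $K_{3,3}$, triangle$\}$-free and $2$-connected, and invoke minimality: the resulting degree-$\le 2$ vertex lies in the interior of the block, where degrees are inherited from $G$, yielding the final contradiction.

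The main obstacle sits in the third paragraph: upgrading the local facts (i)--(ii) into a genuine decomposition statement, that is, proving that once the minimum degree is at least three the attachments to a biclique cannot spread out without creating a new ISK4 or $K_{3,3}$, and therefore must condense into a proper $2$-cutset. A secondary but delicate point is the propagation across the $2$-cutset: I must check that the marker path can always be chosen so the block stays in the class and $2$-connected, and that the inherited low-degree vertex is never one of the two cutset vertices — which is exactly where the ``two low-degree vertices'' strengthening from the first paragraph becomes necessary.
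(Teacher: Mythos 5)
This statement is not proved in the paper at all: it is stated as an open conjecture (attributed to Trotignon and Vu{\v{s}}kovi{\'c}), and the paper's actual results (Theorems \ref{Thm:1} and \ref{Thm:2}) are obtained by an entirely different, non-structural route precisely because no such decomposition theorem is known. So there is no ``paper proof'' to match, and your proposal must be judged as a standalone argument --- and as such it has a genuine gap that you yourself locate but understate. The entire content of the conjecture is concentrated in your third paragraph: the claim that a $2$-connected $\{$ISK4, $K_{3,3}$, triangle$\}$-free graph of minimum degree at least three containing a $C_4$ must be complete bipartite, series-parallel, or admit a proper $2$-cutset is not a ``main obstacle'' to be upgraded from facts (i)--(ii); it is an unproven structure theorem at least as strong as the conjecture itself. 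Facts (i) and (ii) are correct but local; nothing in them prevents the attachments to a maximal $K_{2,t}$ from spreading through long induced paths that reattach far away without creating an ISK4 or a $K_{3,3}$, and ruling this out is exactly the hard open problem. Note also that Lemma \ref{Lm:K33} is of no help here since $G$ is assumed $K_{3,3}$-free.

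There is a second, more technical hole in the induction across the proper $2$-cutset. The block $G_X''$ obtained by replacing $Y$ with a marker path has, by construction, interior marker vertices of degree two, so a bare ``there exists a vertex of degree at most two'' (or even ``there exist two such vertices'') applied to the block can return only vertices that do not exist in $G$ or that lie in the cutset $\{a,b\}$. The standard fix is a stronger inductive statement (for instance, a low-degree vertex outside any prescribed flat path, or two non-adjacent such vertices), and verifying that this strengthened statement propagates through both clique cutsets and proper $2$-cutsets is itself nontrivial; your ``two low-degree vertices'' version is too weak, since both could sit on the marker path or coincide with $a$ and $b$. In short: the skeleton (minimal counterexample, girth split via Theorem \ref{Thm:girth5}, decomposition, reduction of flat paths as in Lemma \ref{Lm:reducing}) is the right shape for an attack on this conjecture, but the decisive decomposition step is asserted rather than proved, which is why the statement remains a conjecture.
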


However, we find a new bound for the chromatic number of ISK4-free graphs using another approach. Our main results are the following theorems:

\begin{theorem} \label{Thm:1}
Let $G$ be an $\{$ISK4, triangle$\}$-free graph. Then $\chi(G)\leq 4$.
\end{theorem}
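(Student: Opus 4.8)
The plan is to prove that $\{$ISK4, triangle$\}$-free graphs are $4$-colorable by exploiting a structural decomposition theorem for this class. The natural strategy is to argue that every such graph either has a vertex of small degree (allowing a greedy/inductive coloring argument) or admits a decomposition along a cutset that behaves well with respect to coloring. First I would look for a decomposition theorem analogous to those used in Theorems \ref{Thm:wheel-free} and \ref{Thm:girth5}, presumably stating that every $\{$ISK4, triangle$\}$-free graph that is not ``basic'' (e.g.\ not series-parallel, not a member of some explicitly described list) admits a clique cutset, a proper $2$-cutset, or some other balanced skew/star cutset. The basic classes should be directly $4$-colorable (indeed series-parallel graphs are $3$-degenerate and hence $3$-colorable, comfortably within the bound).

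The heart of the argument is then an induction on $|V(G)|$. Assuming the decomposition, I would take a minimal counterexample $G$, conclude it has no vertex of degree at most $3$ (else delete it, color the rest by induction, and extend the coloring since at most $3$ forbidden colors appear), and then apply the decomposition to split $G$ into strictly smaller blocks $G_1,\dots,G_k$ that are still $\{$ISK4, triangle$\}$-free. Each block receives a $4$-coloring by the induction hypothesis; the remaining task is to glue these colorings together across the cutset. For a clique cutset this is immediate by permuting colors; for a $2$-cutset $\{a,b\}$ one must ensure $a$ and $b$ can be forced to receive a consistent pair of colors in each block, which is where the triangle-free hypothesis helps: since $a$ and $b$ are nonadjacent, they may legitimately share a color, enlarging the space of compatible colorings.

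The step I expect to be the main obstacle is precisely this gluing across a $2$-cutset (or whatever multi-vertex cutset the decomposition produces): I would need a lemma guaranteeing that each block admits a $4$-coloring realizing \emph{every} prescribed constraint on the cutset vertices, or at least a rich enough palette of constraints that the blocks can be reconciled. This typically requires a strengthened induction hypothesis — for instance, that the block with the cutset collapsed into a suitable gadget (adding an edge, or a marker path, to encode adjacency or non-adjacency demands) remains in the class and is $4$-colorable. Verifying that such a gadget does not create an ISK4 or a triangle is the delicate part. A secondary difficulty is confirming that the decomposition theorem I invoke is actually available in the literature in exactly the form needed; if it provides only a clique cutset and a $2$-cutset then the argument is clean, but if more exotic cutsets (skew or star cutsets) appear, the gluing lemma becomes substantially harder and may force the use of the weaker bound $4$ rather than the conjectured $3$.
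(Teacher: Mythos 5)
Your proposal rests on a decomposition theorem for $\{$ISK4, triangle$\}$-free graphs that, as far as the literature goes, does not exist, and the fallback you suggest (a vertex of degree at most $3$) is essentially Conjecture \ref{conj:4} of the paper, which is explicitly open. Concretely: after excluding $K_{3,3}$ (which is handled by Lemma \ref{Lm:K33} and a clique cutset), no further decomposition of $\{$ISK4, $K_{3,3}$, triangle$\}$-free graphs into basic classes plus clique/proper $2$-cutsets is known; the decomposition results of \cite{LMT12} (Lemmas \ref{Lm:K33} and \ref{Lm:prism,K222}) only apply when the graph contains a $K_{3,3}$, a prism, or a rich square, and the latter two are irrelevant here since they contain triangles. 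So the ``heart of the argument'' in your plan --- split into blocks, color each, glue --- never gets off the ground for the graphs that remain after the $K_{3,3}$ case, and your own stated worry (``confirming that the decomposition theorem I invoke is actually available'') is exactly where the proposal fails. The paper cites Theorems \ref{Thm:wheel-free} and \ref{Thm:girth5} precisely as evidence that the decomposition route has so far only succeeded on proper subclasses.

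The paper's actual proof takes a different and more elementary route: fix a vertex $u$ and consider the BFS layers $N_i(u)$. Lemma \ref{Lm:attach-hole} and Lemma \ref{Lm:hole} show that no layer $G|N_i(u)$ contains a hole (using upstairs paths and confluences through the upper layers to build a forbidden ISK4 whenever a hole appears in a layer); since the graph is also triangle-free, each layer induces a forest and is $2$-colorable. Lemma \ref{Lm:chromatic-layer} then gives $\chi(G)\leq 2+2=4$, with the clique cutset and Lemma \ref{Lm:K33} disposing of the $K_{3,3}$ case at the outset. If you want to salvage your plan, you would need to either prove the missing decomposition theorem or prove Conjecture \ref{conj:4}; both are substantially harder than the layering argument the paper uses.
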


\begin{theorem} \label{Thm:2}
Let $G$ be an ISK4-free graph. Then $\chi(G)\leq 24$.
\end{theorem}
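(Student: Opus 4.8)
The plan is to bootstrap the triangle-free bound of Theorem~\ref{Thm:1} through a structural decomposition of ISK4-free graphs, working with a minimal counterexample and tracking how $\chi$ behaves as the graph is taken apart. First I would record the global consequences of ISK4-freeness that tame the clique structure: since $K_4$ is itself a (trivial) subdivision of $K_4$, every ISK4-free graph satisfies $\omega(G)\le 3$, and for every vertex $v$ the graph $G[N(v)]$ is an induced subgraph containing no triangle (a triangle in $N(v)$ together with $v$ would be a $K_4$). Hence each $G[N(v)]$ is $\{$ISK4, triangle$\}$-free and therefore $4$-colorable by Theorem~\ref{Thm:1}, which keeps the local structure around every vertex uniformly bounded.

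Next I would pass to a minimal counterexample $G$, an ISK4-free graph with $\chi(G)>24$ on the fewest vertices, and dispose of the easy cutset. Clique cutsets are free: if $K$ is a clique separating $G$ into $G_1$ and $G_2$, then colorings of $G_1$ and $G_2$ can be made to agree on $K$ (each gives $K$ distinct colors, which can be matched by a permutation of color names), so $\chi(G)=\max(\chi(G_1),\chi(G_2))$ and neither block reaches $25$. Thus I may assume $G$ has no clique cutset, and I would then invoke the decomposition theorem for ISK4-free graphs from \cite{LMT12}: a clique-cutset-free ISK4-free graph either lies in a short list of basic classes (such as series-parallel graphs, complete bipartite graphs, and certain line graphs) or admits a proper $2$-cutset. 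The basic classes are harmless — series-parallel graphs are $2$-degenerate and hence $3$-colorable, complete bipartite graphs are $2$-colorable, and the relevant line graphs are $\chi'$-colorable and so $4$-colorable by Vizing's theorem — and any triangle-free block is $4$-colorable directly by Theorem~\ref{Thm:1}.

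The heart of the argument, and the step I expect to be the main obstacle, is the proper $2$-cutset case. Unlike a clique cutset, a proper $2$-cutset $\{a,b\}$ with $a,b$ nonadjacent cannot be crossed for free: merging a coloring of one side with a coloring of the other requires the two sides to agree on whether $a$ and $b$ receive equal colors, and the two induced blocks may be forced into opposite choices. To reconcile them I would attach to each block a small gadget at $\{a,b\}$, an \emph{amplification} $\amp(B)$ of the block $B$, designed so that (i) it preserves ISK4-freeness and does not enlarge the graph beyond $G$, so that minimality or the basic-class bounds still apply, and (ii) it pins the colors of $a$ and $b$ so that the partial colorings can be glued after a permutation of color names. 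Each merge then costs a bounded multiplicative factor rather than a clean equality, and carefully accounting for this loss across the decomposition is exactly what replaces the conjectured value $4$ by the explicit constant. Concretely, the target is to organize the palette into a constant number of independent blocks, each supplying a Theorem~\ref{Thm:1}-type $4$-coloring, with the number of blocks controlled by the worst case of the $2$-cutset merge; bounding that number by $6$ would give $\chi(G)\le 24$ and the desired contradiction. The delicate point throughout is verifying that every gadget introduced by the amplification keeps the enlarged blocks ISK4-free, since otherwise neither minimality nor the structural dichotomy can be brought to bear on them.
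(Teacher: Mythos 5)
There is a genuine gap, and it is the main one: your argument rests on a decomposition dichotomy that does not exist. You assert that a clique-cutset-free ISK4-free graph ``either lies in a short list of basic classes \dots or admits a proper $2$-cutset.'' The decomposition results of \cite{LMT12} (Lemmas \ref{Lm:K33} and \ref{Lm:prism,K222} in this paper) only produce a basic structure or a cutset \emph{under the additional hypothesis} that $G$ contains a $K_{3,3}$, a prism, or a rich square. Once all of those are excluded --- i.e.\ for graphs in ${\cal C}_3$ with no clique cutset and no proper $2$-cutset --- no such dichotomy is known; if it were, one would be very close to Conjecture \ref{conj:1} itself. Your proposal never says what to do with this residual class, and that class is where all the work lies. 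The paper handles it by a triply nested layering argument: it shows that for $G\in{\cal C}_3$ each distance layer $N_i(u)$ is $4$-wheel-free, that for such graphs each layer is boat-free, and that for boat-free graphs each layer has girth at least $5$ and is hence $3$-colorable by Theorem \ref{Thm:girth5}; Lemma \ref{Lm:chromatic-layer} then gives $3\cdot 2\cdot 2\cdot 2=24$. None of this (nor any substitute for it) appears in your proposal, and Theorem \ref{Thm:1} is in fact never used in the paper's proof of Theorem \ref{Thm:2} --- your opening observation that neighborhoods are triangle-free, while true, does not feed into any step that closes the argument.

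Two smaller points. First, your treatment of proper $2$-cutsets is both vaguer and weaker than necessary: the standard device (used in the paper) is to replace one side $Y$ by a flat path from $a$ to $b$, reduce that path via Lemma \ref{Lm:reducing} to keep the block ISK4-free and smaller, color both blocks with $24$ colors by induction, and permute colors to make them agree on $\{a,b\}$ (which receive distinct colors in each block since $ab$ is an edge after reduction). This costs nothing --- there is no multiplicative loss --- so your proposed accounting in which the constant $24$ arises as $4\times 6$ from cutset merges is not how the bound is obtained and is not justified by anything you write. Second, ``series-parallel graphs'' are not one of the basic outcomes of the relevant decomposition lemmas, so even the basic-class portion of your case analysis does not match what the cited structure theory actually provides.
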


Remark that the bounds we found are much closer to the bound of the conjectures than the known ones. The main tool that we use to prove these theorems is classical. It is often used to prove $\chi$-boundedness results relying on the layers of neighborhood. The paper is organized as follows. We first introduce some notations in Section \ref{S:2}. Sections \ref{S:3} and \ref{S:4} are devoted to the proof of Theorem \ref{Thm:1} and \ref{Thm:2}, respectively. 

\section{Preliminaries} \label{S:2}

In this section, we present some notations and useful lemmas which will be used later in our proof. Let $G(V,E)$ be a graph, we denote by $|G|$ the number of its vertices. A vertex $v$ of the graph $G$ is \textit{complete} to a set of vertices $S\subseteq V(G)\setminus v$ if $v$ is adjacent to every vertex in $S$. A graph is called \textit{complete bipartite} (resp. \textit{complete tripartite}) if its vertex set can be partitioned into two (resp. three) non-empty stable sets that are pairwise complete to each other. If these two (resp. three) sets have size $p$, $q$ (resp. $p$, $q$, $r$) then the graph is denoted by $K_{p,q}$ (resp. $K_{p,q,r}$). A complete bipartite or tripartite graph is \textit{thick} if it contains a $K_{3,3}$. Given a graph $H$, the \textit{line graph} of $H$ is the graph $L(H)$ with vertex set $E(G)$ and edge set $\{ef:e\cap f\neq \emptyset\}$. A graph $P$ on $\{x_1,\ldots,x_n\}$ is a \textit{path} if $x_ix_j\in E(P)$ iff $|i-j|=1$ (this is often referred to \textit{induced path} in literature). The \textit{length} of a path is the number of its edges. The two \textit{ends} of $P$ are $x_1$ and $x_n$. The \textit{interior} of $P$ is $\{x_2,\ldots,x_{n-1}\}$. We denote by $x_iPx_j$ the subpath of $P$ from $x_i$ to $x_j$ and denote by $P^*$ the subpath of $P$ from $x_2$ to $x_{n-1}$ ($x_2Px_{n-1}$). A path $P$ is \textit{flat} in $G$ if all the interior vertices of $P$ are of degree $2$ in $G$. When $S\subseteq V(G)$, we denote by $N(S)$ the set of neighbors of $S$ in $G\setminus S$ and denote by $G|S$ the subgraph of $G$ induced by $S$. When $K\subseteq V(G)$ and $C\subseteq V(G)\setminus K$, we denote by $N_K(C)$ the set of neighbors of $C$ in $K$, or $N_K(C)=N(C)\cap K$.  

A \textit{cutset} in a graph is a subset $S\subsetneq V(G)$ such that $G\setminus S$ is disconnected. For any $k\geq 0$, a $k$-cutset is a cutset of size $k$. A cutset $S$ is a \textit{clique cutset} if $S$ is a clique. A \textit{proper $2$-cutset} of a graph $G$ is a $2$-cutset $\{a,b\}$ such that $ab\notin E(G)$, $V(G)\setminus \{a,b\}$ can be partitioned into two non-empty sets $X$ and $Y$ so that there is no edge between $X$ and $Y$ and each of $G[X\cup\{a,b\}]$ and $G[Y\cup\{a,b\}]$ is not a path from $a$ to $b$. A \textit{prism} is a graph made of three vertex-disjoint paths $P_1 = a_1\ldots b_1$, $P_2 = a_2\ldots b_2$, $P_3 = a_3\ldots b_3$ of length at least $1$, such that $a_1a_2a_3$ and $b_1b_2b_3$ are triangles and no edges exist between the paths except these of the two triangles. Let $S=\{u_1,u_2,u_3,u_4\}$ induces a square (i.e. $C_4$) in $G$ with $u_1$, $u_2$, $u_3$, $u_4$ in this order along the square. A \textit{link} of $S$ is a path $P$ of $G$ with ends $p$, $p'$ such that either $p=p'$ and $N_S(p)=S$, or $N_S(p)=\{u_1,u_2\}$ and $N_S(p')=\{u_3,u_4\}$, or $N_S(p)=\{u_1,u_4\}$ and $N_S(p')=\{u_2,u_3\}$, and no interior vertex of $P$ has a neighbor in $S$. A \textit{rich square} is a graph $K$ that contains a square $S$ as an induced subgraph such that $K\setminus S$ has at least two components and every component of $K\setminus S$ is a link of $S$. For example, $K_{2,2,2}$ is a rich square (it is the smallest one).

We use in this paper some decomposition theorems from \cite{LMT12}: 

\begin{lemma}[see Lemma 3.3 in \cite{LMT12}] \label{Lm:K33}
Let $G$ be an ISK4-free graph that contains $K_{3,3}$. Then either $G$ is a thick complete bipartite or complete tripartite graph, or $G$ has a clique cutset of size at most $3$.
\end{lemma}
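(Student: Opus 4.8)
The plan is to fix a maximal ``rigid'' complete multipartite piece and control how the rest of $G$ hangs off it. Concretely, let $K$ be an induced subgraph of $G$ that is a thick complete bipartite or a thick complete tripartite graph and whose vertex set is inclusion-wise maximal with this property; such a $K$ exists since the $K_{3,3}$ inside $G$ is already thick complete bipartite. If $V(K)=V(G)$ then $G$ itself is thick complete bipartite or tripartite and we are in the first outcome, so from now on assume $V(K)\neq V(G)$ and aim to exhibit a clique cutset of size at most $3$. I will use repeatedly that a clique of a complete bipartite (resp.\ tripartite) graph meets each side in at most one vertex, hence has size at most $2$ (resp.\ $3$), and that $|V(K)|\geq 6$.

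The first step is a single-vertex attachment lemma: \emph{for every $v\in V(G)\setminus V(K)$, the set $N_K(v)$ is a clique of $K$} (otherwise $v$ together with $K$ would form a strictly larger thick complete bipartite or tripartite graph, contradicting maximality). I would prove it by ruling out every non-clique attachment through an explicit induced subdivision of $K_4$. Writing $A,B$ for two sides with $|A|,|B|\geq 3$, there are two model computations. If $v$ has two neighbours $a_1,a_2$ on one side and a neighbour $b_1$ on the other, then, choosing $b_2\in B$ with $b_2\not\sim v$ (such $b_2$ exists unless $v$ is complete to $B$, in which case the roles of the two sides are exchanged, and if $v$ is complete to both sides then maximality is contradicted), the set $\{v,a_1,a_2,b_1,b_2\}$ induces a subdivision of $K_4$ with branch vertices $v,a_1,a_2,b_1$ in which the only subdivided edge is $a_1a_2$, routed through $b_2$. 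If instead $v$ has two neighbours $a_1,a_2$ on one side and \emph{no} neighbour on the other but misses some $a_3\in A$, then $\{v,a_1,a_2,a_3,b_1,b_2\}$ induces a subdivision of $K_4$ with branch vertices $a_1,a_2,b_1,b_2$, the edge $a_1a_2$ subdivided by $v$ and the edge $b_1b_2$ subdivided by $a_3$. Since being complete to a whole side contradicts maximality (it enlarges $K$, possibly turning a bipartite $K$ into a tripartite one), these two computations and their mirror images force $|N_K(v)\cap A|\leq 1$ and $|N_K(v)\cap B|\leq 1$, i.e.\ $N_K(v)$ is a clique; the tripartite case is handled by running the same constructions inside two of the three sides.

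The second and crucial step lifts this to connected pieces: \emph{for every connected component $C$ of $G\setminus K$, the set $N_K(C)$ is a clique}. Granting this, the lemma follows at once. Indeed $S:=N_K(C)$ is then a clique of size at most $3$, while $|V(K)|\geq 6$ gives $V(K)\setminus S\neq\emptyset$; and since distinct components of $G\setminus K$ are non-adjacent we have $N(C)\subseteq V(K)$, so the first vertex outside $C$ on any path leaving $C$ lies in $N_K(C)=S$, whence $S$ separates $C$ from $V(K)\setminus S$ and is a clique cutset of size at most $3$ (if $S=\emptyset$ then $G$ is disconnected and the empty clique already works). To prove the claim, suppose $N_K(C)$ contains two non-adjacent vertices $x,y$; lying in a common side, say $x,y\in A$, they are both complete to some side $B$ with $|B|\geq 3$. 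Take a shortest path $P$ from $x$ to $y$ with interior in $C$; by the single-vertex lemma every interior vertex of $P$ meets $B$ in at most one vertex. The goal is to select branch vertices $x,y,b,b'$ with $b,b'\in B$, use $P$ as the $x,y$-connection, the four edges $xb,xb',yb,yb'$, and a short $b,b'$-connection through a vertex of $A\setminus\{x,y\}$, and verify that the result is an \emph{induced} subdivision of $K_4$; the simplest instance is $C=\{v,w\}$ with $v\sim a_1,b_1$, $w\sim a_2,b_2$ and $v\sim w$, where $\{a_1,v,b_1,b_2,w,a_3\}$ already induces such a subdivision.

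The delicate point, and the step I expect to be the main obstacle, is exactly this routing. The interior of $P$ may send edges into $B$ and, in the tripartite case, into the third side, so one cannot always avoid the vertices of $B$ that $P$ sees by simply picking ``clean'' branch vertices: when $P$ meets too many vertices of $B$ one must instead \emph{incorporate} a seen vertex into the gadget, as in the two-vertex instance above. Making this choice uniformly over all lengths of $P$, and over the bipartite and tripartite cases, is where the genuine case analysis lies; the two structural facts that carry it through are that each side of $K$ has at least three vertices (so there is always enough slack to choose the two $B$-branch-vertices or to reroute) and that each vertex of $C$ meets $K$ in a clique (so the potential chords created by $P$ are few and locally controlled).
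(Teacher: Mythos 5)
First, a remark on the comparison target: the paper does not prove this statement at all --- it imports it verbatim as Lemma~3.3 of \cite{LMT12} --- so your proposal can only be measured against that external proof, whose overall strategy (fix a maximal thick complete bipartite/tripartite induced subgraph $K$ containing the given $K_{3,3}$, show that attachments to $K$ are cliques, and extract a clique cutset $N_K(C)$ for a component $C$ of $G\setminus K$) is indeed the one you adopt. Your first step (for each single vertex $v\notin V(K)$, $N_K(v)$ is a clique) is essentially sound: the two gadgets you exhibit are genuinely induced subdivisions of $K_4$, the ``complete to a whole side'' escapes are correctly absorbed by maximality, and the remaining tripartite edge cases (e.g.\ a small third part, or $v$ meeting two large parts and the small one) can be settled by variants of your second gadget. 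Your concluding derivation of the clique cutset from the component claim is also correct.

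The genuine gap is the second step, and you have named it yourself without closing it: the claim that $N_K(C)$ is a clique for every \emph{connected component} $C$ of $G\setminus K$. Your proposed construction takes non-adjacent $x,y\in N_K(C)$ (hence in a common part $A$), a shortest path $P$ from $x$ to $y$ with interior in $C$, and branch vertices $x,y,b,b'$ with $b,b'\in B$ joined through some $a\in A\setminus\{x,y\}$. For this to be an \emph{induced} subdivision of $K_4$ you need $b$, $b'$ and $a$ to have no neighbour in the interior of $P$; but although each interior vertex of $P$ meets each part of $K$ in at most one vertex (by step one), $P$ may have arbitrarily many interior vertices, so $P^*$ can collectively dominate all of $B$, all of $A\setminus\{x,y\}$, and the third part, leaving no admissible choice of $b,b',a$. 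Your fallback --- ``incorporate a seen vertex into the gadget'' --- is verified only for the single instance $|V(P^*)|=2$, and the sentence ``making this choice uniformly over all lengths of $P$\dots is where the genuine case analysis lies'' is precisely an admission that the core of the lemma is not proved. A complete argument here must either induct on $|V(P)|$ (rerouting through a $K$-neighbour of an interior vertex to shorten the configuration) or choose $x,y,P$ minimizing a suitable measure and then analyze the few possible attachment patterns of $P^*$ to $K$; as written, the proposal does neither, so the clique cutset is not established.
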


\begin{lemma}[see Lemmas 6.1 and 7.2 in \cite{LMT12}] \label{Lm:prism,K222}
Let $G$ be an ISK4-free graph that contains a rich square or a prism. Then either $G$ is the line graph of a graph with maximum degree $3$, or $G$ is a rich square, or $G$ has a clique cutset of size at most $3$ or $G$ has a proper $2$-cutset.
\end{lemma}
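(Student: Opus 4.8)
The plan is to treat the two hypotheses separately: the case where $G$ contains a prism and the case where $G$ contains a rich square. Both are handled by the same engine, namely an extremal analysis of how the rest of the graph attaches to a maximal copy of the relevant building block, with the ISK4-free hypothesis doing all the work. Whenever an outside vertex attaches to the block in a ``spread out'' way, one can route three internally disjoint induced paths between two branch vertices and thereby exhibit an induced subdivision of $K_4$, a contradiction. The observation that makes the prism case fit this pattern is that a prism is exactly the line graph $L(R)$ of a \emph{theta graph} $R$ (two branch vertices joined by three internally disjoint paths), which has maximum degree $3$; thus ``$G$ contains a prism'' means $G$ contains an induced subgraph that is the line graph of a graph of maximum degree $3$, and such line graphs are themselves ISK4-free.

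For the prism case, I would fix $K\subseteq V(G)$ maximal (say, maximizing $|K|$) such that $G|K$ is connected and isomorphic to $L(R)$ for some graph $R$ of maximum degree $3$, where $K$ contains the prism. Let $C$ be a connected component of $G\setminus K$; the goal is to understand $N_K(C)$. Using the Krausz partition of the line graph $K$ into cliques (each clique coming from a vertex of $R$, and each vertex of $K$ lying in at most two such cliques), I would classify the admissible neighborhoods of a single vertex $c\in C$ with $N_K(c)\neq\emptyset$. The claim is that, to avoid an induced $K_4$-subdivision, $N_K(c)$ must either be contained in one triangle of $K$ (the edges of $R$ meeting at one branch vertex), or consist of the two ends of a flat path of $K$ (the images of a maximal degree-$2$ path of $R$). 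In the first situation $N_K(C)$ is a clique of size at most $3$, and since $C$ and $K\setminus N_K(C)$ are both nonempty it is a clique cutset of size at most $3$. In the second situation $C$ together with that flat path either enlarges $R$ by subdividing or adding an edge, contradicting the maximality of $K$, or the two path-ends form a proper $2$-cutset. If no component attaches at all, then $G=K=L(R)$ and we are in the line-graph outcome.

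The rich-square case runs in parallel. I would fix a maximal rich square $K$ with its square $S=\{u_1,u_2,u_3,u_4\}$ and its links, and take a component $C$ of $G\setminus K$. Here the constraint from ISK4-freeness is that any vertex of $C$ adjacent to $S$ must see one of the three allowed pairs in the definition of a link (or all of $S$), for otherwise two existing links of $S$ together with a path through $C$ give three induced paths between the two diagonal pairs of $S$, i.e.\ an ISK4. Consequently $C$ either extends the rich square by providing a new link, forcing $G=K$ to be a rich square when nothing can be extended, or it attaches inside a single link, where the same local analysis as in the prism case produces a clique cutset of size at most $3$ or a proper $2$-cutset at the ends of a flat portion of the link.

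The main obstacle is the local classification in the prism/line-graph case: one must check that every adjacency pattern of an outside vertex to a line graph, other than the two benign ones above, really does yield an induced subdivision of $K_4$, and then promote this single-vertex statement to a statement about the whole component $C$, since several vertices of $C$ may touch $K$. Two points need genuine care. First, long flat paths (degree-$2$ paths of $R$ of length at least $2$) must be handled so that the separation obtained is legitimate, which is exactly where the ``not a path from $a$ to $b$'' clause in the definition of a proper $2$-cutset is used to rule out the degenerate case. Second, the notion of maximality must be chosen so that every extension step produces a strictly larger valid $K$ of the same kind; maximizing $|K|$ over connected induced line graphs of maximum-degree-$3$ graphs containing the prism, respectively over rich squares containing a fixed square, makes the descent well-founded and closes the argument.
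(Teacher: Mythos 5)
First, note that the paper does not prove this lemma at all: it is imported verbatim from L\'ev\^eque--Maffray--Trotignon (Lemmas 6.1 and 7.2 of \cite{LMT12}), so there is no in-paper argument to compare against. Your outline does reconstruct the strategy actually used there --- fix a maximal induced subgraph $K$ that is a line graph of a graph of maximum degree $3$ (resp.\ a maximal rich square), classify how a component of $G\setminus K$ can attach to $K$ without creating an ISK4, and read off either an extension of $K$ (contradicting maximality), a clique cutset of size at most $3$, or a proper $2$-cutset. The observation that a prism is the line graph of a theta graph, and that line graphs of subcubic graphs are themselves ISK4-free, is also the correct entry point.

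The genuine gap is that the heart of the proof is asserted rather than carried out. The claim that ``$N_K(c)$ must either be contained in one triangle of $K$ or consist of the two ends of a flat path of $K$,'' and its promotion from a single vertex $c$ to a whole component $C$ (several vertices of which may meet $K$ in different places), is precisely the content of the long chain of attachment lemmas in Sections 6 and 7 of \cite{LMT12}; it requires an exhaustive case analysis over the Krausz partition and is where all the difficulty lives. As written, your key claim is not even obviously correctly stated: in the rich-square case a vertex of $C$ with exactly one neighbor in the square $S$ is not excluded by your dichotomy (interior vertices of links have no neighbor in $S$, so such a vertex fits neither pattern), and in the line-graph case you must also rule out attachments that spread over two distinct cliques of the Krausz partition that do not come from a common flat path. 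Finally, the maximality descent needs more care than ``$K\cup C$ is a strictly larger valid $K$'': one must verify that the extended subgraph is again an \emph{induced} line graph of a subcubic graph (resp.\ again a rich square), which fails for most attachment patterns and is exactly why the case analysis cannot be skipped. So the proposal is a correct road map but not a proof; to make it one you would essentially have to reprove Lemmas 6.1 and 7.2 of \cite{LMT12}, and citing them, as the paper does, is the appropriate course.
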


\textit{Reducing} a flat path $P$ of length at least $2$ means deleting its interior and add an edge between its two ends. The following lemma shows that a graph remains ISK4-free after reducing a flat path: 

\begin{lemma} [see Lemma 11.1 in \cite{LMT12}] \label{Lm:reducing}
Let $G$ be an ISK4-free graph. Let $P$ be a flat path of length at least $2$ in $G$ and $G'$ be the graph obtained from $G$ by reducing $P$. Then $G'$ is ISK4-free.
\end{lemma}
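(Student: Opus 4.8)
The plan is to argue by contradiction: assume $G'$ contains an induced subdivision of $K_4$, call it $H'$, and then lift it to an induced subdivision of $K_4$ in $G$, contradicting that $G$ is ISK4-free. Write $P = a\ldots b$, so that $G'$ is obtained from $G$ by deleting the interior $\mathrm{int}(P)$ of $P$ and adding the edge $ab$; note that $ab\notin E(G)$ since $P$ is an (induced) path of length at least $2$. The only difference between $G$ and $G'$, restricted to $V(G') = V(G)\setminus \mathrm{int}(P)$, is the single edge $ab$.

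First I would dispose of the easy case in which $H'$ does not contain both $a$ and $b$. Since $H'$ is an induced subgraph of $G'$ and $ab$ is the only edge on which $G$ and $G'$ disagree over $V(G')$, if at least one of $a,b$ lies outside $V(H')$ then the edge $ab$ is irrelevant, and the subgraph of $G$ induced on $V(H')$ is exactly $H'$. Hence $G$ already contains $H'$ as an induced subdivision of $K_4$, a contradiction.

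The main case is when $a,b\in V(H')$; then, $H'$ being induced and $ab\in E(G')$, we have $ab\in E(H')$. Here I would build $H$ from $H'$ by reinserting the interior of $P$: set $V(H) = V(H')\cup \mathrm{int}(P)$ and replace the edge $ab$ by the path $P$. Topologically this merely subdivides one edge of the subdivision of $K_4$, so $H$ is again a subdivision of $K_4$: the four branch vertices are unchanged, the vertices $a$ and $b$ keep their degrees, and all newly inserted vertices have degree $2$.

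The part requiring care — and the only real obstacle — is verifying that $H$ is an \emph{induced} subgraph of $G$, i.e. that no extra edges appear. This is precisely where the flatness of $P$ is used: every interior vertex of $P$ has degree $2$ in $G$, so within $V(H)$ its only neighbors are its two consecutive vertices along $P$; in particular no interior vertex of $P$ is adjacent to any vertex of $V(H')\setminus\{a,b\}$, nor to a nonconsecutive interior vertex. Consequently the edges of $G$ spanned by $V(H)$ are exactly the edges of $H'$ other than $ab$ together with the edges of $P$, which is precisely $E(H)$. Thus $H$ is an induced subdivision of $K_4$ in $G$, contradicting the hypothesis, and we conclude that $G'$ is ISK4-free.
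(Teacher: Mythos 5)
Your proof is correct and follows exactly the same strategy as the paper's (much terser) argument: if the ISK4 in $G'$ avoids the new edge $ab$ it already lives in $G$, and otherwise replacing $ab$ by $P$ yields an ISK4 in $G$, with flatness of $P$ guaranteeing inducedness. No issues.
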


\begin{proof}
Let $e$ be the edge of $G'$ that results from the reduction of $P$. Suppose that $G'$ contains an ISK4 $H$. Then $H$ must contain $e$, for otherwise $H$ is an ISK4 in $G$. Then replacing $e$ by $P$ in $H$ yields an ISK4 in $G$, contradiction. 
\end{proof}

It is shown in \cite{LMT12} that clique cutsets and proper $2$-cutsets are useful for proving Conjecture \ref{conj:1} in the inductive sense. If we can find such a cutset in $G$, then we immediately have a bound for the chromatic number of $G$, since $\chi(G)\leq \max\{\chi(G_1),\chi(G_2)\}$, where $G_1$ and $G_2$ are two blocks of decomposition of $G$ with respect to that cutset (see the proof of Theorem~ 1.4 in \cite{LMT12}). Therefore, we only have to prove Conjecture \ref{conj:1} for the class of $\{$ISK4, $K_{3,3}$, prism, $K_{2,2,2}\}$-free graphs and prove Conjecture \ref{conj:2} for the class of $\{$ISK4, $K_{3,3}$, triangle$\}$-free graphs since the existence of $K_{3,3}$, prism or $K_{2,2,2}$ implies a good cutset by Lemmas \ref{Lm:K33} and \ref{Lm:prism,K222}. 

We say that $S$ \textit{dominates} $C$ if $N_C(S)=C$. The \textit{distance} between two vertices $x$, $y$ in $V(G)$ is the length of a shortest path from $x$ to $y$ in $G$. Let $u\in V(G)$ and $i$ be an integer, denote by $N_i(u)$ the set of vertices of $G$ that are of distance exactly $i$ from $u$. Note that there are no edges between $N_i(u)$ and $N_j(u)$ for every $i,j$ such that $|i-j|\geq 2$. 

\begin{lemma} \label{Lm:upstair-path}
Let $G$ be a graph, $u\in V(G)$ and $i$ be an integer $\geq 1$. Let $x,y$ be two distinct vertices in $N_i(u)$. Then, there exists a path $P$ in $G$ from $x$ to $y$ such that $V(P)\subseteq \{u\}\cup N_1(u)\cup \ldots\cup N_{i}(u)$ and $|V(P)\cap N_j(u)|\leq 2$ for every $j\in\{1,\ldots,i\}$.
\end{lemma}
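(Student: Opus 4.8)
The plan is to build $P$ out of two shortest paths emanating from $u$ and then extract an induced path between $x$ and $y$ from their union. First I would record the basic property of the distance layers: if $Q = u = q_0, q_1, \ldots, q_i = z$ is a shortest path from $u$ to a vertex $z\in N_i(u)$, then $q_j\in N_j(u)$ for every $j$. Indeed, the prefix $q_0\ldots q_j$ witnesses $\mathrm{dist}(u,q_j)\leq j$, while $\mathrm{dist}(u,q_j)\geq \mathrm{dist}(u,z)-(i-j)=j$. Hence a shortest path to a vertex of $N_i(u)$ meets each layer $N_j(u)$ (for $0\leq j\leq i$) in exactly one vertex and never leaves $\{u\}\cup N_1(u)\cup\ldots\cup N_i(u)$.

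Next I would pick shortest paths $P_x$ from $u$ to $x$ and $P_y$ from $u$ to $y$, and set $W=V(P_x)\cup V(P_y)$. By the property above, $W\subseteq \{u\}\cup N_1(u)\cup\ldots\cup N_i(u)$, and $|W\cap N_j(u)|\leq 2$ for every $j\in\{1,\ldots,i\}$, since each of $P_x$ and $P_y$ contributes a single vertex to each layer. Moreover $u\in W$ is joined to both $x$ and $y$ inside $G[W]$, so $x$ and $y$ lie in the same connected component of $G[W]$.

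Finally I would take $P$ to be a shortest path from $x$ to $y$ in the induced subgraph $G[W]$, which exists by the connectivity just noted. A shortest path in an induced subgraph is an induced path of $G$, so $P$ is a path in the sense of this paper. Since $V(P)\subseteq W$, it inherits both required properties: $V(P)\subseteq \{u\}\cup N_1(u)\cup\ldots\cup N_i(u)$, and $|V(P)\cap N_j(u)|\leq |W\cap N_j(u)|\leq 2$ for each $j$.

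The one point where a naive argument breaks down — and thus the only real subtlety — is that directly concatenating $P_x$ and $P_y$ at $u$ need not produce an induced path, because there may be chords joining the two arms. Routing instead through a shortest path of $G[W]$ sidesteps this entirely, and the layer bounds survive automatically because they are properties of the vertex set $W$, which passing to a subpath only shrinks. I therefore expect no genuine obstacle beyond the discipline of working with the induced subgraph $G[W]$ rather than with the concatenated walk.
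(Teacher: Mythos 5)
Your proof is correct, but it takes a genuinely different route from the paper. The paper argues by induction on $i$: for the inductive step it picks neighbours $x',y'\in N_{i-1}(u)$ of $x,y$, takes an upstairs path $P'$ of $\{x',y'\}$ given by the induction hypothesis, and appends $x$ and $y$ to it. You instead take the union $W$ of two shortest $u$--$x$ and $u$--$y$ paths, observe that a shortest path to a vertex of $N_i(u)$ meets each layer exactly once (so $|W\cap N_j(u)|\leq 2$), and then extract a shortest --- hence induced --- $x$--$y$ path inside $G[W]$. Your version is non-inductive and, notably, deals cleanly with the chord issue that the paper's inductive step glosses over: in the paper's construction $x$ could be adjacent to $y'$ or $y$ to $x'$ (both lie in the adjacent layer), in which case $P'\cup\{x,y\}$ is not an induced path as written; your ``shortest path in $G[W]$'' device absorbs all such chords automatically while preserving the layer bounds, since they are properties of the ambient vertex set $W$. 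The paper's induction has the advantage of making the recursive, layer-by-layer shape of the upstairs path explicit (which is how it is used later), but as a proof of the stated lemma your argument is at least as rigorous.
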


\begin{proof}
We prove this by induction on $i$. If $i=1$, we have $x,y\in N_1(u)$. If $xy\in E(G)$, we choose $P=xy$, otherwise, choose $P=xuy$. Suppose that the lemma is true until $i=k$, we prove that it is also true for $i=k+1$. If $xy\in E(G)$, we choose $P=xy$. Otherwise, let $x',y'$ be the vertices in $N_k(u)$ such that $x'x,y'y\in E(G)$. If $x'=y'$, we choose $P=xx'y$, otherwise choose $P=P'\cup \{x,y\}$, where $P'$ is the path with two ends $x'$ and $y'$ generated by applying induction hypothesis. 
\end{proof}

Such a path $P$ in Lemma \ref{Lm:upstair-path} is called the \textit{upstairs path} of $\{x,y\}$. For three distinct vertices $x,y,z\in V(G)$, a graph $H$ is a \textit{confluence} of $\{x,y,z\}$ if it is one of the two following types:
\begin{itemize}
	\item Type $1$: 
	\begin{itemize}
		\item $V(H)=V(P_x)\cup V(P_y)\cup V(P_z)$.
		\item $P_x$, $P_y$, $P_z$ are three paths having a common end $u$ and $P_x\setminus u$, $P_y\setminus u$, $P_z\setminus u$ are pairwise disjoint. The other ends of $P_x$, $P_y$, $P_z$ are $x$, $y$, $z$, respectively. 
		\item These are the only edges in $H$.
	\end{itemize}
	\item Type $2$:
	\begin{itemize}
		\item $V(H)=V(P_x)\cup V(P_y)\cup V(P_z)$.
		\item $P_x$ is a path with two ends $x$ and $x'$.
		\item $P_y$ is a path with two ends $y$ and $y'$.
		\item $P_z$ is a path with two ends $z$ and $z'$.
		\item $P_x$, $P_y$, $P_z$ are pairwise disjoint.
		\item $x'y'z'$ is a triangle.
		\item These are the only edges in $H$.
	\end{itemize}
\end{itemize}

If $H$ is a confluence of Type $1$, the vertex $u$ is called the \textit{center} of $H$ and if $H$ is a confluence of Type $2$, the triangle $x'y'z'$ is called the \textit{center triangle} of $H$. Note that the length of $P_x$ can be $0$ when $x=u$ (for Type $1$) or $x=x'$ (for Type $2$).

\begin{lemma} \label{Lm:confluence}
Let $G$ be a graph, $u\in V(G)$ and $i$ be an integer $\geq 1$. Let $x,y,z$ be three distinct vertices in $N_i(u)$. Then, there exists a set $S\subseteq \{u\}\cup N_1(u)\cup \ldots\cup N_{i-1}(u)$ such that $G|(S\cup\{x,y,z\})$ is a confluence of $\{x,y,z\}$.
\end{lemma}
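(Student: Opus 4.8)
The plan is to avoid any recursion and instead extract the confluence as an inclusion-minimal connected object. Let $B:=\{u\}\cup N_1(u)\cup\cdots\cup N_{i-1}(u)$ and consider the induced subgraph $G_0:=G|(B\cup\{x,y,z\})$. Since every vertex at distance $\le i-1$ from $u$ has a shortest path to $u$ staying inside $B$, the graph $G|B$ is connected; and each of $x,y,z$, lying in $N_i(u)$, has a neighbour in $N_{i-1}(u)\subseteq B$. Hence $x,y,z$ lie in a single component of $G_0$. I would then let $H$ be an inclusion-minimal induced subgraph of $G_0$ that is connected and contains $\{x,y,z\}$ (such $H$ exists since $G_0$ is finite). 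Setting $S:=V(H)\setminus\{x,y,z\}$ we automatically get $S\subseteq B$, exactly as required, so the whole lemma reduces to the single claim that $H$ is a confluence of $\{x,y,z\}$. Note this argument is uniform in $i$ and needs no separate base case.

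To prove that claim I would first record two consequences of minimality, calling $x,y,z$ the \emph{terminals}. First, every leaf of $H$ is a terminal: deleting a non-terminal leaf keeps $H$ connected and still containing the terminals, contradicting minimality; thus $H$ has at most three leaves. Second, every non-terminal vertex $v$ is a cut vertex of $H$, for otherwise $H-v$ would be a strictly smaller connected induced subgraph still containing all terminals. Equivalently, every non-cut-vertex of $H$ is a terminal.

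From these two facts I would deduce the structure. If $H$ is a tree, then its at most three leaves are all terminals, so $H$ is either a path whose two ends are terminals and whose interior contains the third terminal (a Type $1$ confluence centred at that middle terminal), or a subdivided $K_{1,3}$ whose three legs end at $x,y,z$ (a Type $1$ confluence centred at the unique degree-$3$ vertex). If $H$ contains a cycle $C$, then on $C$ each non-cut vertex is a terminal while each cut vertex carries a pendant branch reaching a further leaf, hence a further private terminal; counting shows $C$ forces at least $|C|$ distinct terminals, so $|C|\le 3$ and $C$ is a triangle, and the same count forbids a second cycle. Thus $H$ is a single triangle together with three internally disjoint induced paths emanating from its vertices and ending at $x,y,z$ (some possibly trivial, when a triangle vertex is itself a terminal), which is precisely a Type $2$ confluence with centre triangle $C$. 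In every case the absence of extra edges is automatic because $H$ is induced.

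The main obstacle is this structural claim, and within it the step bounding the cycle length: one must argue carefully, using that every non-cut vertex is a terminal and every cut vertex spawns a branch ending in its own terminal, that a cycle of length $\ge 4$, or two distinct cycles, would require at least four terminals. The remaining work is only the bookkeeping of the degenerate cases (trivial paths, a terminal lying on the centre triangle, a terminal in the interior of a path) needed to match the tree/triangle dichotomy to the two types in the definition; this is routine once the dichotomy is established.
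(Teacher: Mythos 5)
Your proposal is correct, but it takes a genuinely different route from the paper's. The paper builds the confluence directly: it takes an induced path $P$ from $x$ to $y$ with interior in $\{u\}\cup N_1(u)\cup\ldots\cup N_{i-1}(u)$, then a path $Q$ from $z$ to $P$, chosen to minimize $|V(P\cup Q)|$, and asserts that this minimality forces $G|V(P\cup Q)$ to be a confluence (the Type~$2$ case arises when the attachment of $Q$ to $P$ creates a triangle). You instead take a globally inclusion-minimal connected induced subgraph $H$ containing the three terminals and classify it by a leaf / cut-vertex / block analysis. Your version proves a cleaner and more general fact --- any inclusion-minimal connected induced subgraph containing three prescribed vertices is a confluence of them --- at the cost of the block-counting step you rightly flag as the crux; the paper's version hides a comparable case analysis behind ``it is easy to see.'' One repair is needed in your counting step: a component $D$ of $H-v$ hanging off a cut vertex $v$ of a cycle need not a priori contain a leaf of $H$ (it could itself be $2$-connected through $v$), so the ``further private terminal'' should be justified directly from inclusion-minimality: if $D$ contained no terminal, then $H\setminus D$ would be a strictly smaller connected induced subgraph still containing $x,y,z$. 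With that justification the counts do go through --- distinct cut vertices on a cycle have disjoint hanging components, so any cycle forces at least as many terminals as it has vertices, whence every block is an edge or a triangle and at most one triangle block can occur --- and the remaining bookkeeping matches the two types of confluence exactly as you describe.
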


\begin{proof}
Let $G'$ be the subgraph of $G$ induced by $\{u\}\cup N_1(u)\cup \ldots\cup N_{i-1}(u)$. It is clear that $G'$ is connected. Let $P$ be a path in $G'$ from $x$ to $y$ and $Q$ be a path in $G'$ from $z$ to $P$ (one end of $Q$ is in $P$). We choose $P$ and $Q$ subject to minimize $|V(P\cup Q)|$. It is easy to see that $G|V(P\cup Q)$ is a confluence of $\{x,y,z\}$.
\end{proof}

The notions of upstairs path and confluence are very useful to find induced structures in our graph since they establish a way to connect two or three vertices of the same layer through only the upper layers. 

\begin{lemma} \label{Lm:chromatic-layer}
Let $G$ be a graph and $u\in V(G)$. Then: $$\chi(G)\leq \max_{i\textrm{ odd}}\chi(G|N_i(u))+\max_{j \textrm{ even}}\chi(G|N_j(u)).$$
\end{lemma}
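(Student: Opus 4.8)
The plan is to exploit the remark stated just before the lemma: there are no edges between $N_i(u)$ and $N_j(u)$ whenever $|i-j|\geq 2$. Since any two \emph{distinct} odd indices differ by at least $2$, this immediately tells us that the subgraph induced by the union of all odd layers, namely $\bigcup_{i\text{ odd}}N_i(u)$, is exactly the disjoint union of the graphs $G|N_i(u)$ over odd $i$; the identical statement holds for the even layers (note $N_0(u)=\{u\}$ sits in the even part). From this I would conclude that the chromatic number of the odd part equals $\max_{i\text{ odd}}\chi(G|N_i(u))$ and that of the even part equals $\max_{j\text{ even}}\chi(G|N_j(u))$, using the elementary fact that a disjoint union can be properly colored by overlaying optimal colorings of its components drawn from one shared palette.

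With this in hand I would construct a coloring of $G$ directly. Assuming $G$ is connected, so that $\{u\}\cup N_1(u)\cup N_2(u)\cup\cdots$ partitions $V(G)$ (in general one simply argues on the component of $u$ and treats the other components separately, each being at least as cheap to color), I take two \emph{disjoint} sets of colors $A$ and $B$ with $|A|=\max_{i\text{ odd}}\chi(G|N_i(u))$ and $|B|=\max_{j\text{ even}}\chi(G|N_j(u))$. I then color every odd layer using colors from $A$ and every even layer using colors from $B$, within each fixed layer $N_i(u)$ choosing an optimal proper coloring of $G|N_i(u)$.

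It remains to verify that this coloring is proper, which is where the two structural observations pay off. Every edge $xy$ of $G$ joins vertices whose distances from $u$ differ by at most $1$, so $x$ and $y$ lie either in the same layer or in two consecutive layers $N_i(u)$ and $N_{i+1}(u)$. In the first case the edge is properly colored because we used a proper coloring of that single layer; in the second case one index is odd and the other even, so $x$ and $y$ receive colors from $A$ and $B$ respectively, and these differ since $A\cap B=\emptyset$. Hence the coloring is proper and uses $|A|+|B|$ colors, which is precisely the claimed bound. There is essentially no serious obstacle here: the argument is routine, and the only points that genuinely require care are recognizing that layers of the same parity are pairwise non-adjacent (so that the two \emph{maxima}, rather than sums over all layers, suffice) and ensuring the two palettes are kept disjoint.
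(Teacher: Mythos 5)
Your argument is correct and is essentially the same as the paper's: both rely on the observation that layers of equal parity are pairwise non-adjacent, color the odd layers with one palette and the even layers with a disjoint palette, and check properness edge by edge. Your version merely spells out the details (disjoint-union maxima, connectivity, the two edge cases) that the paper leaves implicit.
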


\begin{proof}
It is clear that in $G$, there are no edges between $N_i(u)$ and $N_j(u)$ if $i\neq j$ and $i,j$ are of the same parity. Therefore, we can color all the odd layers with $\max_{i\textrm{ odd}}\chi(G|N_i(u))$ colors and all the even layers with $\max_{j \textrm{ even}}\chi(G|N_j(u))$ other colors. The lemma follows.
\end{proof}

\section{Proof of Theorem \ref{Thm:1}} \label{S:3}

The next lemma shows that if there is a set $S$ that dominates some hole $C$, then there must exist some vertices in $S$ which have very few (one or two) neighbors in $C$.

\begin{lemma} \label{Lm:attach-hole}
Let $G$ be an $\{$ISK4, triangle, $K_{3,3}\}$-free graph and $C$ be a hole in $G$. Let $S\subseteq V(G)\setminus C$ be such that every vertex in $S$ has at least a neighbor in $C$ and $S$ dominates $C$. Then one of the following cases holds:
\begin{enumerate}
	\item \label{Lm:attach-hole:1} There exist four distinct vertices $u_1$, $u_2$, $u_3$, $u_4$ in $S$ and four distinct vertices $v_1$, $v_2$, $v_3$, $v_4$ in $C$ such that for $i\in\{1,2,3,4\}$, $N_C(u_i)=\{v_i\}$.
	\item \label{Lm:attach-hole:2} There exist three distinct vertices $u_1$, $u_2$, $u_3$ in $S$ and three distinct vertices $v_1$, $v_2$, $v_3$ in $C$ such that for $i\in\{1,2,3\}$, $N_C(u_i)=\{v_i\}$ and $v_1$, $v_2$, $v_3$ are pairwise non-adjacent.
	\item \label{Lm:attach-hole:3} There exist three distinct vertices $u_1$, $u_2$, $u_3$ in $S$ and four distinct vertices $v_1$, $v_2$, $v_3$, $v_3'$ in $C$ such that $N_C(u_1)=v_1$, $N_C(u_2)=v_2$, $N_C(u_3)=\{v_3,v_3'\}$ and $v_1$, $v_3$, $v_2$, $v_3'$ appear in this order along $C$.
\end{enumerate}
\end{lemma}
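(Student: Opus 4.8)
The plan is to argue by contradiction: assume that none of the three configurations occurs, and produce either an induced subdivision of $K_4$ or an induced $K_{3,3}$, contradicting the hypotheses on $G$. Before anything else I would record the two elementary constraints forced by triangle-freeness and ISK4-freeness. Since $G$ is triangle-free, for every $u\in S$ the set $N_C(u)$ is a stable set of the hole $C$, so any two cyclically consecutive neighbours of $u$ bound an arc of length at least $2$. Since $G$ is ISK4-free and $C$ is a hole, no $u\in S$ can have exactly three neighbours on $C$: three such neighbours $a,b,c$, together with the three arcs of $C$ joining them and the three edges $ua,ub,uc$, would form an induced subdivision of $K_4$. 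Hence every $u\in S$ has $1$, $2$, or at least $4$ neighbours on $C$.

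Next I would set $V_1=\{v\in C:\ N_C(u)=\{v\}\ \text{for some}\ u\in S\}$, the pendant-dominated vertices of $C$, and reinterpret the three conclusions in these terms: conclusion (1) says $|V_1|\geq 4$; conclusion (2) says $V_1$ contains three pairwise non-adjacent vertices; conclusion (3) says some $u\in S$ with exactly two neighbours $a,b$ on $C$ has $a$ and $b$ separating two vertices of $V_1$ along $C$. Thus the negation of the conclusion means simultaneously that $|V_1|\leq 3$, that $V_1$ contains no three pairwise non-adjacent vertices, and that the two neighbours of no $2$-neighbour vertex of $S$ straddle $V_1$.

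The engine of the argument is a crossing observation for $2$-neighbour vertices. Suppose $u,u'\in S$ each have exactly two neighbours on $C$ and these pairs interleave, say the four attachment points occur in the cyclic order $a,c,b,d$ with $N_C(u)=\{a,b\}$ and $N_C(u')=\{c,d\}$. Because $u$ and $u'$ have no other neighbour on $C$, the only adjacency in question among the six vertices is the edge $uu'$ itself. If $u\not\sim u'$, then the four arcs of $C$ form a $4$-cycle on $\{a,b,c,d\}$ while the paths $aub$ and $cu'd$ are its two diagonals, giving an induced subdivision of $K_4$; if $u\sim u'$, the same six branch points contract to $K_{3,3}$, which is an induced $K_{3,3}$ precisely when the four arcs all have length $1$. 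So interleaving non-adjacent $2$-neighbour vertices are forbidden outright, and the adjacent case is tightly constrained.

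With these tools I would pass to a minimal dominating subset $S_0\subseteq S$, so that every vertex of $S_0$ has a private vertex of $C$, and study the family of neighbourhoods as a system of chords and pendants on the cycle $C$. Under the negated conclusion the pendant-dominated vertices form a short non-independent cluster that no chord may separate, so a long sub-arc of $C$ is covered purely by $2$-neighbour and $\geq 4$-neighbour vertices; triangle-freeness forces consecutive vertices of this arc to have distinct dominators, and a short analysis of the resulting chord system should yield two interleaving chords, non-adjacent if possible, hence an induced subdivision of $K_4$. The hard part will be threefold, and is where essentially all the work lies: handling the vertices of $S$ with at least four neighbours, which appear in none of the three configurations and so must be shown unable to cover an arc without creating a forbidden subdivision by themselves; controlling laminar (nested, non-crossing) chord families, about which the crossing observation says nothing and which must instead be shown to force configuration (3); and disposing of the degenerate case in which every interleaving pair is adjacent, where one obtains only a subdivided $K_{3,3}$ and must exploit an extra attachment on one of the long arcs to promote it to a genuine induced $K_4$-subdivision or to an induced $K_{3,3}$. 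Small holes, where no long sub-arc exists, would be settled directly as base cases.
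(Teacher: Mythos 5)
Your preliminary observations are sound and genuinely useful: $N_C(u)$ is stable by triangle-freeness, a vertex with exactly three neighbours on $C$ yields an ISK4 together with $C$, the reformulation of the three outcomes in terms of the set $V_1$ of privately dominated vertices is faithful, and the crossing observation for two non-adjacent $2$-neighbour vertices with interleaving attachments is a valid source of ISK4's. But the proposal stops exactly where the proof has to start. You yourself flag the three places where ``essentially all the work lies'', and none of them is carried out. (i) Vertices of $S$ with at least four neighbours on $C$ are untouched by your crossing observation, and $C$ together with such a vertex is not by itself an ISK4 (a wheel with four or more spokes contains no induced subdivision of $K_4$, since a subdivision of $K_4$ has maximum degree $3$ and cannot avoid the extra spokes); saying they ``must be shown unable to cover an arc'' is an assertion, not an argument. (ii) Nested, non-crossing chords give you nothing from the crossing observation and do not ``force configuration (3)'' on their own: outcome (3) requires producing private-neighbour vertices on both sides of a chord, which is precisely the content that is missing. (iii) The claim that ``a short analysis of the resulting chord system should yield two interleaving chords'' is the entire lemma in the hard cases and is unjustified. (As an aside, your degenerate adjacent-crossing case is better than you think: an induced subdivision of $K_{3,3}$ in which only the four cycle edges are subdivided already contains an ISK4 as soon as one arc has length at least $2$, so no ``extra attachment'' is needed there; but this does not repair the other gaps.)

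For comparison, the paper proves the lemma by induction on $|C|$. If some $2$-neighbour vertex $u$ is a \emph{bivertex} (both arcs between its neighbours have length at least $3$), the hole is split through $u$ into two shorter holes; the device that makes the recursion work is the addition of three artificial pendant vertices, which forces the inductive outcome on each sub-hole to return a vertex of $S$ whose unique neighbour lies in the interior of the corresponding arc, and these two vertices together with $u$ realise outcome (3). Vertices with at least four neighbours are then handled by an extremal argument (maximising the length of a window of four consecutive attachment points of a single vertex) combined with the same splitting device, and only afterwards does the easy one-or-two-neighbour analysis take over. Without some mechanism of this kind for manufacturing private-neighbour vertices inside prescribed arcs, your chord-system analysis cannot close, so as it stands the proposal is a plan rather than a proof.
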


\begin{proof}
We prove Lemma \ref{Lm:attach-hole} by induction on the length of hole $C$ for every $\{$ISK4, triangle, $K_{3,3}\}$-free graph. First, suppose that the length of $C$ is $4$ and $C=c_0c_1c_2c_3$. Since $G$ is triangle-free, a vertex in $S$ can only have one or two neighbors in $C$. We consider two cases:
\begin{itemize}
	\item If some vertex $u\in S$ has two neighbors in $C$, w.l.o.g, suppose $N_C(u)=\{c_0,c_2\}$. Since $S$ dominates $C$, there exists some vertices $v$, $w\in S$ such that $vc_1, wc_3\in E$. If $v=w$ then $\{u,v,w\}\cup C$ induces $K_{3,3}$ (if $uv\in E$) or an ISK4 (if $uv\notin E$), contradiction. Therefore, $v\neq w$ and $u,v,w$ are three vertices satisfying output \ref{Lm:attach-hole:3} of the lemma.
	\item If every vertex in $S$ has exactly one neighbor in $C$, output \ref{Lm:attach-hole:1} of the lemma holds.
\end{itemize} 
	Now, we may assume that $|C|\geq 5$ and the lemma is true for every hole of length at most $|C|-1$. A vertex $u\in S$ is a \textit{bivertex} if $N_C(u)=\{u',u''\}$ and the two paths $P_1$, $P_2$ from $u'$ to $u''$ in $C$ are of lengths at least $3$. Suppose that $S$ contains such a bivertex $u$. Let $C_1=P_1\cup \{u\}$, $C_2=P_2\cup \{u\}$, note that $|C_1|,|C_2|<|C|$. Consider the graph $G'$ obtained from $G$ as follows: $V(G')=V(G)\cup \{a,b,c\}$, $E(G')=E(G)\cup \{au,bu',cu''\}$. It is clear that $G'$ is $\{$ISK4, triangle, $K_{3,3}\}$-free. Let $S_1=\{v\in S\setminus u|N_{C_1}(v)\neq \emptyset\}\cup \{a,b,c\}$ and $S_2=\{v\in S\setminus u|N_{C_2}(v)\neq \emptyset\}\cup \{a,b,c\}$. By applying the induction hypothesis on $S_1$ and $C_1$, we obtain that there is some vertex $x\in S$ such that $x$ has exactly one neighbor in $P_1$ which is in $P_1^*$ ($x$ can be adjacent to $u$). We claim that $x$ has exactly one neighbor in $C$. Indeed, if $x$ has exactly one neighbor $x'$ in $P_2^*$ then $C\cup \{x,u\}$ induces an ISK4 (if $xu\notin E(G)$) or $C_1\cup \{x\}\cup Q$ induces an ISK4 (if $xu\in E(G)$), where $Q$ is the shorter path in one of the two paths in $C$: $x'P_2u'$ and $x'P_2u''$, contradiction. If $x$ has at least two neighbors in $P_2^*$, let $x'$, $x''$ be the neighbors of $x$ closest to $u'$, $u''$ on $P_2^*$, respectively. Then $C_1\cup \{x\}\cup x'P_2u'\cup x''P_2u''$ induces an ISK4 (if $xu\notin E(G)$) or $C_1\cup \{x\}\cup x'P_2u'$ induces an ISK4 (if $xu\in E(G)$), contradiction. So, $x$ has no neighbor in $P_2^*$ and has exactly one neighbor in $C$ as claimed. Similarly, by applying the induction hypothesis on $S_2$ and $C_2$, we know that there is some vertex $y\in S$ such that $y$ has exactly one neighbor in $P_2^*$ and this is also its only neighbor in $C$. Now, $\{x,y,u\}$ satisfies output \ref{Lm:attach-hole:3} of the lemma. Hence, we may assume that $S$ contains no bivertex.
	
Suppose that there is some vertex $u$ in $S$ which has at least four neighbors in $C$. Let $N_C(u)={u_0,\ldots,u_k}$ where $u_0,\ldots,u_k$ ($k\geq 3$) appear in that order along $C$. Let $P_u(i,i+3)$ be the path of $C$ from $u_i$ to $u_{i+3}$ which contains $u_{i+1}$ and $u_{i+2}$ and define $\amp(u,C)=\max_{i=0}^k |P_u(i,i+3)|$ (the index is taken in modulo $k+1$). Note that this notion is defined only for a vertex with at least four neighbors in $C$. Let $v\in S$ be such that $\amp(v,C)$ is maximum. W.l.o.g suppose that $P_v(0,3)$ is the longest path among all paths of the form $P_v(i,i+3)$. Let $P_0$, $P_1$, $P_2$ be the subpaths of $P_v(0,3)$ from $v_0$ to $v_1$, $v_1$ to $v_2$, $v_2$ to $v_3$, respectively. Let $C_0=\{v\}\cup P_0$, $C_1=\{v\}\cup P_1$ and $C_2=\{v\}\cup P_2$. Consider the graph $G'$ obtained from $G$ as follows: $V(G')=V(G)\cup \{a,b,c\}$, $E(G')=E(G)\cup \{av,bv_0,cv_1\}$. It is clear that $G'$ is $\{$ISK4, triangle, $K_{3,3}\}$-free. Let $S_0=\{u\in S\setminus v|N_{C_0}(u)\neq \emptyset\}\cup \{a,b,c\}$. By applying the induction hypothesis on $S_0$ and $C_0$, we obtain that there is some vertex $x\in S$ such that $x$ has exactly one neighbor $x_0$ in $P_0$ which is in $P_0^*$ ($x$ can be adjacent to $v$). We claim that $x$ has exactly one neighbor in $C$. Suppose that $x$ has some neighbor in $P_1$. Let $x_1$, $x_2$ be the neighbors of $x$ in $P_1$ which is closest to $v_1$ and $v_2$, respectively ($x_1$ and $x_2$ could be equal). Then we have $\{x,v\}\cup P_0\cup v_1P_1x_1\cup v_2P_1x_2$ induces an ISK4 (if $xv\notin E(G)$) or $\{x,v\}\cup P_0\cup v_1P_1x_1$ induces an ISK4 (if $xv\in E(G)$), contradiction. Therefore, $x$ has no neighbor in $P_1$. Suppose that $x$ has some neighbor in $P_2$, let $x_1$ be the neighbor of $x$ in $P_2$ which is closest to $v_2$. Let $Q$ be the path from $x_0$ to $x_1$ in $C$ which contains $v_1$. We have $\{x,v\}\cup Q\cup v_0P_0x_0$ induces an ISK4 (if $xv\notin E(G)$) or $\{x,v\}\cup Q$ induces an ISK4 (if $xv\in E(G)$), contradiction. Hence, $x$ has no neighbor in $P_2$. Now if $x$ has at least four neighbors in $C$, $\amp(x,C)>\amp(v,C)$, contradiction to the choice of $v$. Hence, $x$ can have at most one neighbor in the path from $v_0$ to $v_3$ in $C$ which does not contain $v_1$. Suppose $x$ has one neighbor $x'$ in that path. By the assumption that we have no bivertex, $x'v_0, v_0x_0\in E(G)$. Let $Q$ be the path from $v_{-1}$ to $x'$ in $C$ which does not contain $v_0$. We have $\{x,x',v_0,x_0,v\}\cup Q\cup v_1P_0x_0$ induces an ISK4 (if $xv\notin E(G)$) or $\{x,x',v_0,x_0,v\}\cup Q$ induces an ISK4 (if $xv\in E(G)$), contradiction. Hence, $x_0$ is the only neighbor of $x$ in $C$, as claimed. Similarly, we can prove that there exist two vertices $y,z\in S$ such that they have exactly one neighbor in $C$ which are in $P_1^*$, $P_2^*$, respectively. Note that the proof for $y$ is not formally symmetric to the one for $x$ and $z$, but the proof is actually the same. In particular, a vertex $y$ with a unique neighbor in $P_1^*$, no neighbor in $P_0$, $P_2$ and at least four neighbors in $C$ also yields a contradiction to the maximality of $\amp(v,C)$.  Therefore, $\{x,y,z\}$ satisfies output \ref{Lm:attach-hole:2} of the lemma. Now, we can assume that no vertex in $S$ has at least four neighbors in $C$. 

Hence, every vertex in $S$ either has exactly one neighbor in $C$ or exactly two neighbors in $C$ and is not a bivertex. Suppose there is some vertex $u$ that has two neighbors $u'$, $u''$ on $C$ and let $x\in C$ be such that $xu',xu''\in E$. Let $v\in S$ be a vertex adjacent to $x$. If $v$ has another neighbor $x'$ in $C$ then $x'$ must be adjacent to $u'$ or $u''$, since $v$ is not a bivertex. So, we have $\{u,v,x',u',x,u''\}$ induces an ISK4 (if $uv\in E(G)$) or $\{u,v\}\cup C$ induces an ISK4 (if $uv\notin E(G)$), contradiction. So, $v$ has only one neighbor $x$ in $C$. Hence, if we have at least one vertex which has two neighbors on $C$, the output \ref{Lm:attach-hole:3} holds. If every vertex has exactly one neighbor in $C$, the output \ref{Lm:attach-hole:1} holds, which completes the proof.     	
\end{proof}

\begin{lemma} \label{Lm:hole}
Let $G$ be an $\{$ISK4, triangle, $K_{3,3}\}$-free graph and $u\in V(G)$. For every $i\geq 1$, $G|N_i(u)$ does not contain any hole.
\end{lemma}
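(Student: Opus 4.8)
The plan is to argue by contradiction, building an induced subdivision of $K_4$ out of a hole living in one layer together with a connecting structure that lives strictly below it. First I dispose of $i=1$: since $u$ is complete to $N_1(u)$ and $G$ is triangle-free, $N_1(u)$ is a stable set and contains no hole. So assume $i\geq 2$ and suppose for contradiction that $G|N_i(u)$ contains a hole $C$; note $C$ is then also a hole of $G$. Set $S=\{s\in N_{i-1}(u): N_C(s)\neq\emptyset\}$. Every vertex of $C$ is at distance $i$ from $u$, hence has a neighbor in $N_{i-1}(u)$, so $S$ dominates $C$ and every vertex of $S$ has a neighbor in $C$; thus Lemma~\ref{Lm:attach-hole} applies and yields one of its three outputs. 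I will also use repeatedly that, because $G$ is triangle-free, every confluence produced by Lemma~\ref{Lm:confluence} is of Type~$1$, since a Type~$2$ confluence contains its center triangle.

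Outputs~\ref{Lm:attach-hole:1} and \ref{Lm:attach-hole:2} I handle together. I pick three supplied vertices $u_1,u_2,u_3\in S$ with pairwise distinct single neighbors $v_1,v_2,v_3$ on $C$ and apply Lemma~\ref{Lm:confluence} to $u_1,u_2,u_3\in N_{i-1}(u)$, obtaining a Type~$1$ confluence with center $w$ and three internally disjoint paths $P_{u_1},P_{u_2},P_{u_3}$, all contained in $\{u\}\cup N_1(u)\cup\cdots\cup N_{i-1}(u)$ and whose only vertices in $N_{i-1}(u)$ are $u_1,u_2,u_3$. Since vertices of layers $\leq i-2$ have no neighbor in $N_i(u)$, this confluence meets $C$ only along the edges $u_kv_k$. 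Combining it with the three arcs into which $v_1,v_2,v_3$ split $C$ gives a subdivision of $K_4$ with branch vertices $v_1,v_2,v_3,w$ (the arcs realize the triangle on $v_1,v_2,v_3$, the confluence paths the three edges at $w$), and the layer analysis guarantees it is induced, contradicting that $G$ is ISK4-free.

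Output~\ref{Lm:attach-hole:3} is the delicate case, and there I use a \emph{theta} rather than a confluence. Now $v_3,v_3'=N_C(u_3)$ split $C$ into two arcs $A_1\ni v_1$ and $A_2\ni v_2$, with $v_2$ strictly inside $A_2$. I connect $u_2$ to $u_3$ by the upstairs path $Q$ of $\{u_2,u_3\}$ from Lemma~\ref{Lm:upstair-path}; the crucial point is that its per-layer bound $|V(Q)\cap N_{i-1}(u)|\leq 2$ is already saturated by the endpoints $u_2,u_3$, so every interior vertex of $Q$ lies in a layer $\leq i-2$ and hence has no neighbor on $C$. Then $V(C)\cup V(Q)$ induces a subdivision of $K_4$ with branch vertices $v_3,v_3',v_2,u_3$: the edges $u_3v_3$ and $u_3v_3'$ are genuine, the two halves of $A_2$ realize $v_3v_2$ and $v_2v_3'$, the path $u_3\,Q\,u_2v_2$ realizes $u_3v_2$, and the whole arc $A_1$ through $v_1$ realizes $v_3v_3'$; note that $u_1$ is discarded, so $v_1$ keeps degree $2$. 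This again contradicts ISK4-freeness and completes the proof.

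The main obstacle is not locating a $K_4$-subdivision but certifying it is \emph{induced}: the connecting structure lives in the lower layers, and one must rule out every stray chord between it and the hole $C$, as well as chords among its own branches. This is exactly what the layer bookkeeping delivers — the only vertices of a confluence or an upstairs path that can see $N_i(u)$ are the at most two prescribed top vertices in $N_{i-1}(u)$, and each of them attaches to $C$ precisely at its designated neighbor $v_k$. Within this scheme Output~\ref{Lm:attach-hole:3} is the most delicate point, since the correct ISK4 there has a different shape (a theta closed off by a single connector) and one must still check that the discarded attachment $u_1$ does not contaminate the arc $A_1$.
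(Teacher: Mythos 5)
Your proof is correct and takes essentially the same route as the paper's: dominate the hole from $N_{i-1}(u)$, invoke Lemma~\ref{Lm:attach-hole}, dispose of outputs~\ref{Lm:attach-hole:1} and~\ref{Lm:attach-hole:2} with a Type~1 confluence (Type~2 being excluded by triangle-freeness) and of output~\ref{Lm:attach-hole:3} with an upstairs path joining a single-attachment vertex to the two-attachment vertex. The only differences are cosmetic: you connect $u_2$ rather than $u_1$ to the bivertex, and you spell out the layer bookkeeping certifying inducedness, which the paper leaves implicit.
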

\begin{proof}
Suppose for some $i$, $G|N_i(u)$ contains a hole $C$. For every vertex $v\in C$, there exists a vertex $v'\in N_{i-1}(u)$ such that $vv'\in E$. Hence there exists a subset $S\subseteq N_{i-1}(u)$ such that $S$ dominates $C$. Let us apply Lemma \ref{Lm:attach-hole} for $S$ and $C$:
\begin{itemize}
	\item If output \ref{Lm:attach-hole:1} or \ref{Lm:attach-hole:2} of Lemma \ref{Lm:attach-hole} holds, then there exist three distinct vertices $u_1$, $u_2$, $u_3$ in $S$ and three distinct vertices $v_1$, $v_2$, $v_3$ in $C$ such that for $i\in\{1,2,3\}$, $N_C(u_i)=\{v_i\}$. By Lemma \ref{Lm:confluence}, since $G$ is triangle-free, there exists a confluence $F$ of $\{u_1,u_2,u_3\}$ of Type $1$, so $F\cup C$ induces an ISK4, contradiction.
	\item If output \ref{Lm:attach-hole:3} of Lemma \ref{Lm:attach-hole} holds, then there exist two distinct vertices $u_1$, $u_2$ in $S$ and three distinct vertices $v_1$, $v_2$, $v_2'$ in $C$ such that $N_C(u_1)=v_1$, $N_C(u_2)=\{v_2,v_2'\}$. By Lemma \ref{Lm:upstair-path}, there exists an upstairs path $P$ of $\{u_1,u_2\}$, so $P\cup C$ induces an ISK4, contradiction.
\end{itemize} 
\end{proof}

\begin{proof}[Proof of Theorem \ref{Thm:1}]
We prove the theorem by induction on the number of vertices of $G$. Suppose that $G$ has a clique cutset $K$. So $G\setminus K$ can be partitioned into two sets $X$, $Y$ such that there is no edge between them. By induction hypothesis $\chi(G|(X\cup K))$ and $\chi(G|(Y\cup K))\leq 4$, therefore $\chi(G)\leq \max\{\chi(G|(X\cup K)),\chi(G|(Y\cup K))\}\leq 4$. Hence we may assume that $G$ has no clique cutset. If $G$ contains a $K_{3,3}$, then by Lemma \ref{Lm:K33}, $G$ is a thick complete bipartite graph and $\chi(G)\leq 2$. So we may assume that $G$ contains no $K_{3,3}$. By Lemma \ref{Lm:hole}, for every $u\in V(G)$, for every $i\geq 1$, $G|N_i(u)$ is a forest, hence $\chi(G|N_i(u))\leq 2$. By Lemma \ref{Lm:chromatic-layer}, $\chi(G)\leq 4$, which completes the proof.
\end{proof}

\section{Proof of Theorem \ref{Thm:2}} \label{S:4}

A \textit{boat} is a graph consisting of a hole $C$ and a vertex $v$ that has exactly four consecutive neighbors in $C$ ($N_C(v)$ induces a $C_4$ if $|C|=4$ or a $P_4$ if $|C|\geq 5$). A \textit{4-wheel} is a particular boat whose hole is of length $4$. Let ${\cal C}_1$ be the class of $\{$ISK4, $K_{3,3}$, prism, boat$\}$-free graphs, ${\cal C}_2$ be the class of $\{$ISK4, $K_{3,3}$, prism, $4$-wheel$\}$-free graphs and ${\cal C}_3$ be the class of $\{$ISK4, $K_{3,3}$, prism, $K_{2,2,2}\}$-free graphs. Remark that ${\cal C}_1\subsetneq {\cal C}_2\subsetneq {\cal C}_3\subsetneq$ ISK4-free graphs.

\begin{lemma} \label{Lm:boat-free}
Let $G$ be a graph in ${\cal C}_1$. Then $\chi(G)\leq 6$.
\end{lemma}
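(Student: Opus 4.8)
The plan is to control $\chi(G)$ layer by layer via Lemma~\ref{Lm:chromatic-layer}, as in the proof of Theorem~\ref{Thm:1}, but now bounding each layer by $3$ rather than $2$. Fix $u\in V(G)$. Since $K_4$ is itself an ISK4, every graph in $\mathcal{C}_1$ has clique number at most $3$, and so does each $G|N_i(u)$. The heart of the argument is the following analogue of Lemma~\ref{Lm:hole}: for every $i\ge 1$, $G|N_i(u)$ contains no hole, i.e.\ it is chordal. Granting this, each layer is chordal with $\omega\le 3$, hence $\chi(G|N_i(u))=\omega(G|N_i(u))\le 3$, and Lemma~\ref{Lm:chromatic-layer} gives $\chi(G)\le 3+3=6$.

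To prove chordality of the layers I would follow the scheme of Lemmas~\ref{Lm:attach-hole} and~\ref{Lm:hole}. Suppose some $G|N_i(u)$ contained a hole $C$. Every vertex of $C$ has a neighbour in $N_{i-1}(u)$, so there is a set $S\subseteq N_{i-1}(u)$ dominating $C$. The goal is an attach-hole lemma for $\mathcal{C}_1$ producing either (i) three vertices $u_1,u_2,u_3\in S$ with $N_C(u_j)=\{v_j\}$ and $v_1,v_2,v_3$ \emph{pairwise non-adjacent} on $C$, or (ii) two vertices $u_1,u_2\in S$ with $N_C(u_1)=\{v_1\}$ and $N_C(u_2)=\{v_2,v_2'\}$ in the interleaved position of the third outcome of Lemma~\ref{Lm:attach-hole}. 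The crucial feature, exactly as in Lemma~\ref{Lm:hole}, is that the vertices produced by Lemma~\ref{Lm:upstair-path} or Lemma~\ref{Lm:confluence} to connect $u_1,u_2,u_3$ lie in layers of index $\le i-2$ and hence have no neighbour in $C$, while each $u_j$ meets $C$ only at $v_j$; the structure joined to $C$ is therefore induced and transparent.

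In case (ii) the upstairs path of $\{u_1,u_2\}$ closes with $C$ into an induced subdivision of $K_4$. In case (i) I take a confluence $F$ of $\{u_1,u_2,u_3\}$. If $F$ has Type~$1$ with centre $w$, then $w,v_1,v_2,v_3$ are the branch vertices of an induced $K_4$-subdivision, the three paths of $F$ being the spokes and the three arcs of $C$ the remaining paths. If $F$ has Type~$2$ with centre triangle $t_1t_2t_3$, then $t_1,t_2,t_3,v_1$ are the branch vertices: one uses the three triangle edges, the path of $F$ from $v_1$ to $t_1$, and the two arcs $v_1Cv_2$, $v_1Cv_3$ prolonged by the paths of $F$ reaching $t_2,t_3$, discarding the arc $v_2Cv_3$. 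Here the pairwise non-adjacency of $v_1,v_2,v_3$ granted by output (i) is exactly what prevents a chord between the two arc-paths. Either way $G$ contains an ISK4, a contradiction, so no layer contains a hole.

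The main obstacle is the attach-hole lemma itself. Because $\mathcal{C}_1$ contains triangles, a vertex of $S$ may have adjacent neighbours on $C$ and even long runs of consecutive neighbours, so the dichotomy of Lemma~\ref{Lm:attach-hole} fails verbatim and its inductive proof must be reworked. The forbidden subgraphs still impose strong local restrictions: a vertex with exactly three neighbours on $C$ always creates an induced $K_4$-subdivision, and a vertex with exactly four consecutive neighbours creates a boat. Classifying the admissible attachments and running the induction on $|C|$ (splitting $C$ into shorter holes in a suitably augmented graph, as in Lemma~\ref{Lm:attach-hole}) is considerably heavier than in the triangle-free setting, and it is here that prism-freeness and $K_{3,3}$-freeness must be used to eliminate the triangle attachments that the triangle-free argument never had to confront. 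The small holes $|C|\in\{4,5\}$ are the most delicate, since they contain no three pairwise non-adjacent vertices and output (i) is then unavailable; for them one relies on output (ii) and on boat-freeness — which in particular forbids the $4$-wheel — to force a usable configuration.
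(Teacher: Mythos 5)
There is a genuine gap, and it is not merely the unproven ``attach-hole lemma for $\mathcal{C}_1$'' that you flag as the main obstacle: the intermediate statement your whole plan rests on --- that $G|N_i(u)$ is chordal for every layer --- is false. Consider the $5$-wheel $W_5$, i.e.\ a hole $r_1r_2r_3r_4r_5$ together with a hub $h$ adjacent to all five $r_j$. This graph is ISK4-free (any induced subdivision of $K_4$ on $n$ vertices has exactly $n+2$ edges and maximum degree $3$; checking the induced subgraphs of $W_5$ on $4$, $5$ and $6$ vertices rules all of them out), it is $K_{3,3}$-free and prism-free for trivial reasons, and it is boat-free because its only hole is the rim and $h$ has five, not four, neighbours on it. So $W_5\in\mathcal{C}_1$. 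Taking $u=h$, the layer $G|N_1(u)$ is exactly the hole $C_5$. Hence layers of graphs in $\mathcal{C}_1$ need not be chordal, your proposed attach-hole lemma cannot produce outputs (i) or (ii) for this dominated hole (the single dominating vertex $h$ has five neighbours on $C$ and no ISK4 arises), and the step ``chordal with $\omega\le 3$ gives $\chi\le 3$ per layer'' has nothing to apply to. The reductions you describe from outputs (i) and (ii) to an ISK4 are fine as far as they go, including the Type~$2$ confluence case; the problem is that the configuration they require simply need not exist.

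The paper proves something much weaker about the layers, and that weaker statement suffices: for every $i\ge 1$, $G|N_i(u)$ contains no triangle and no $C_4$, i.e.\ it has girth at least $5$ (longer holes, as in the $W_5$ example, are allowed to survive). This is established by a finite case analysis of how vertices of $N_{i-1}(u)$ can attach to a triangle or a square in $N_i(u)$, using upstairs paths and confluences to manufacture an ISK4, prism, boat or $K_{3,3}$ in each case; no induction on hole length is needed because only the two smallest cycles must be excluded. The bound of $3$ per layer then comes not from perfection but from Theorem~\ref{Thm:girth5}: every ISK4-free graph of girth at least $5$ is $3$-colorable. If you want to repair your argument, this is the substitution to make --- replace ``each layer is chordal'' by ``each layer has girth at least $5$'' and invoke Theorem~\ref{Thm:girth5} instead of perfection of chordal graphs.
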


\begin{proof}
We prove first the following claim.
\begin{claim} \label{Cl:no triangle and C_4}
Let $u\in V(G)$ and $i\geq 1$. Then $G|N_i(u)$ contains no triangle and no $C_4$.
\end{claim}
\begin{proof}
Suppose $G|N_i(u)$ contains a triangle $abc$. No vertex is complete to $abc$ since $G$ is $K_4$-free. Suppose that there is some vertex $x\in N_{i-1}(u)$ which has exactly two neighbors in the triangle, w.l.o.g. assume that they are $a$ and $b$. Let $y$ be some vertex in $N_{i-1}(u)$ adjacent to $c$ and $P$ be an upstairs path of $\{x,y\}$. If $y$ has exactly one neighbor in $abc$ (which is $c$), then $P\cup \{a,b,c\}$ induces an ISK4, contradiction. Hence $y$ must have another neighbor in $C$, say $a$ up to symmetry. In this case, $P\cup \{a,b,c\}$ induces a boat, contradiction. Then every vertex in $N_{i-1}(u)$ has exactly one neighbor in $abc$. Suppose there are three vertices $x,y,z\in N_{i-1}(u)$ such that $N_{abc}(x)=\{a\}$, $N_{abc}(y)=\{b\}$ and $N_{abc}(z)=\{c\}$. By Lemma \ref{Lm:confluence}, there exists a confluence $S$ of $\{x,y,z\}$. If $S$ is of Type $1$, then $S\cup \{a,b,c\}$ induces an ISK4, contradiction. If $S$ is of Type $2$, then $S\cup \{a,b,c\}$ induces a prism, contradiction. Hence, $G|N_i(u)$ contains no triangle.

Suppose $N_i(u)$ contains a $C_4$, namely $abcd$. Every vertex can only have zero, one or two neighbors in $abcd$ since a $4$-wheel is a boat. Suppose there is some vertex $x\in N_{i-1}(u)$ which has exactly two non-adjacent neighbors in $\{a,b,c,d\}$, say $N_{abcd}(x)=\{a,c\}$. Let $y$ be some vertex in $N_{i-1}(u)$ adjacent to $d$ and $P$ be an upstairs path of $\{x,y\}$. If $yb\in E$, then $\{x,y,a,b,c,d\}$ induces an ISK4 (if $xy\notin E$) or a $K_{3,3}$ (if $xy\in E$), contradiction. If $ya\in E$, $P\cup \{a,c,d\}$ induces an ISK4, contradiction. Then $yc\notin E$ also by symmetry, and $N_{abcd}(y)=\{d\}$. In this case $P\cup \{a,b,c,d\}$ induces an ISK4, contradiction. Therefore, there is no vertex in $N_{i-1}(u)$ has two non-adjacent neighbors in $abcd$. Now, suppose that there is some vertex $x\in N_{i-1}(u)$ which has exactly two consecutive neighbors $\{a,b\}$ in $abcd$. Let $y$ be some vertex in $N_{i-1}(u)$ adjacent to $d$ and $P$ be an upstairs path of $\{x,y\}$. If $y$ is adjacent to $c$, then $P\cup \{a,b,c,d\}$ induces a prism, contradiction. If $N_{abcd}(y)=\{d\}$, then $P\cup \{a,b,c,d\}$ induces an ISK4, contradiction. Hence $N_{abcd}(y)=\{a,d\}$. Let $z$ be some vertex in $N_{i-1}(u)$ adjacent to $c$, $P_{xz}$ be an upstairs path of $\{x,z\}$ and $P_{yz}$ be an upstairs path of $\{y,z\}$. If $zb\in E$, $P_{yz}\cup \{a,b,c,d\}$ induces a prism, contradiction. If $zd\in E$, $P_{xz}\cup \{a,b,c,d\}$ induces a prism, contradiction. Hence $N_{abcd}(z)=\{c\}$. In this case, $P_{xz}\cup \{a,b,c,d\}$ induces an ISK4, contradiction. Therefore, there is no vertex in $N_{i-1}(u)$ having two neighbors in $abcd$. So, there are three vertices $x,y,z\in N_{i-1}(u)$ such that $N_{abcd}(x)=\{a\}$, $N_{abcd}(y)=\{b\}$, $N_{abcd}(z)=\{c\}$. By Lemma \ref{Lm:confluence}, there exists a confluence $S$ of $\{x,y,z\}$. If $S$ is of Type $1$, $S\cup \{a,b,c,d\}$ induces an ISK4, contradiction. If $S$ is of Type $2$, $S\cup \{a,b,c\}$ induces an ISK4, contradiction. Therfore, $G|N_i(u)$ contains no $C_4$.      
\end{proof}
By Claim \ref{Cl:no triangle and C_4}, the girth of $N_i(u)$ is at least $5$ for $i\geq 1$. By Theorem \ref{Thm:girth5}, $\chi(G|N_i(u))\leq 3$. By Lemma \ref{Lm:chromatic-layer}, $\chi(G)\leq 6$, which completes the proof.
\end{proof}

\begin{lemma} \label{Lm:4-wheel-free}
Let $G$ be a graph in ${\cal C}_2$. Then $\chi(G)\leq 12$.
\end{lemma}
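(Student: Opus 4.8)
The plan is to mirror the proof of Lemma \ref{Lm:boat-free} one level up. I will show that for every $u \in V(G)$ and every $i \ge 1$, the layer $G|N_i(u)$ contains no boat. Since $G|N_i(u)$ is an induced subgraph of $G \in {\cal C}_2$, it is automatically $\{$ISK4, $K_{3,3}$, prism$\}$-free, so boat-freeness places it in ${\cal C}_1$. Lemma \ref{Lm:boat-free} then gives $\chi(G|N_i(u)) \le 6$, and Lemma \ref{Lm:chromatic-layer} yields $\chi(G) \le 6 + 6 = 12$. Thus the entire content of the lemma reduces to the following claim: no $G|N_i(u)$ contains a boat.

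To prove the claim I would suppose that some $G|N_i(u)$ contains a boat $B$, consisting of a hole $D = d_0 d_1 \ldots d_{k-1}$ and a center $v$ with $N_D(v) = \{d_0, d_1, d_2, d_3\}$. Since $G$, and hence $N_i(u)$, is $4$-wheel-free, we have $k \ge 5$, so $d_0$ and $d_2$ are non-adjacent. The key observation is that $B$ contains the diamond on $\{v, d_0, d_1, d_2\}$ (that is, $K_4$ minus the edge $d_0 d_2$), with $v$ and $d_1$ as its two degree-$3$ vertices. My intention is to subdivide the missing edge $d_0 d_2$ through the upper layers: since $d_0, d_2 \in N_i(u)$, Lemma \ref{Lm:upstair-path} supplies an upstairs path $P$ from $d_0$ to $d_2$ whose interior lies in $N_1(u) \cup \ldots \cup N_{i-1}(u)$. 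If no interior vertex of $P$ is adjacent to $v$ or to $d_1$, then $\{v, d_0, d_1, d_2\} \cup V(P)$ induces a subdivision of $K_4$ with branch vertices $v, d_0, d_1, d_2$ and the single subdivided edge $d_0 d_2$ replaced by $P$ — an ISK4, contradiction.

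It then remains to rule out the case where an interior vertex of $P$ is adjacent to $v$ or $d_1$. Every interior vertex of $P$ lying in $N_1(u) \cup \ldots \cup N_{i-2}(u)$ has no neighbour in $N_i(u)$, so only the (at most two) vertices of $P$ in $N_{i-1}(u)$, namely the neighbours of $d_0$ and $d_2$ along $P$, can see $v$ or $d_1$. When $P$ reduces to a single vertex $w$ (a common neighbour of $d_0$ and $d_2$ in $N_{i-1}(u)$), a short case check on whether $w$ is adjacent to $v$ and to $d_1$ produces in each case a $4$-wheel (on the $C_4$ $d_0 d_1 d_2 w$ with apex $v$, or the $C_4$ $v d_0 w d_2$ with apex $d_1$) or a $K_4$, all forbidden. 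The remaining, genuinely delicate situation is when $P$ has length at least three and one of its two top vertices is adjacent to $v$ or $d_1$; here I would reroute the $K_4$-subdivision using that top vertex as a branch vertex, or extract a $4$-wheel or prism from the dense local configuration, according to the precise adjacency pattern.

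I expect this last step to be the main obstacle. The difficulty is exactly the bookkeeping of how the top of the connecting path attaches to the two hubs $v$ and $d_1$ of the diamond: the local neighbourhood there is dense (both hubs and the path endpoint are mutually close), so naive subdivisions acquire chords and fail to be induced. The clean way to organise this is probably to first establish an ``attach-boat'' statement in the spirit of Lemma \ref{Lm:attach-hole} — guaranteeing a vertex of $N_{i-1}(u)$ whose neighbourhood on $B$ is a single, well-placed vertex — and only then to invoke Lemma \ref{Lm:upstair-path} or Lemma \ref{Lm:confluence} to close up the forbidden structure without spurious edges.
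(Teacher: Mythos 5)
Your overall strategy is exactly the paper's: show that every layer $G|N_i(u)$ is boat-free, conclude it lies in ${\cal C}_1$, apply Lemma \ref{Lm:boat-free} to get $\chi(G|N_i(u))\le 6$, and finish with Lemma \ref{Lm:chromatic-layer}. That reduction is correct, and your observation that only the (at most two) vertices of an upstairs path lying in $N_{i-1}(u)$ can see $v$ or $d_1$ is also correct, as is your disposal of the case where the upstairs path from $d_0$ to $d_2$ has a single interior vertex.

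However, the proof has a genuine gap at precisely the point you flag yourself: the case where the upstairs path $P$ has length at least three and its top vertices are adjacent to $v$ and/or $d_1$ is never resolved, only announced (``I would reroute\dots according to the precise adjacency pattern''). This is not a routine omission. A single upstairs path between $d_0$ and $d_2$ is not enough ammunition here: in several of the adjacency patterns (for instance, the path-neighbour of $d_0$ adjacent to $v$ but not to $d_1$) the set $\{v,d_0,d_1,d_2\}\cup V(P)$ contains no ISK4, $4$-wheel or prism, and one is forced to introduce further vertices of $N_{i-1}(u)$ (neighbours of $d_1$, of $a$, etc.) and connect them with confluences (Lemma \ref{Lm:confluence}) and additional upstairs paths. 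That is what the paper's proof does: it first shows no vertex of $N_{i-1}(u)$ sees both $b$ and $c$ (the two middle neighbours of the boat's apex), then takes distinct $y,z\in N_{i-1}(u)$ attached to $b$ and $c$, proves $xy,xz\in E$ and $N_C(y)=\{b\}$, $N_C(z)=\{c\}$, and finally analyses a third vertex $t$ attached to $a$ together with its neighbours on the upstairs path $P_{yz}$. Your suggested fallback of first proving an ``attach-boat'' analogue of Lemma \ref{Lm:attach-hole} is in the right spirit, but it is exactly the part that carries the mathematical content of the lemma, and it is missing. As written, the proposal establishes the easy reduction and one sub-case, not the lemma.
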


\begin{proof}
We first prove that: for any $u\in V(G)$ and $i\geq 1$, $G|N_i(u)$ contains no boat. We may assume that $i\geq 2$, since $G|N_1(u)$ is triangle-free, the conclusion holds for $i=1$. Suppose for contradiction that $G|N_i(u)$ contains a boat consisting of a hole $C$ and a vertex $x$ that has four neighbors $a$, $b$, $c$, $d$ in this order on $C$. Since $G$ contains no $4$-wheel, we can assume that $|C|\geq 5$ and $\{a,b,c,d\}$ induces a $P_4$. Let $P$ be the path from $a$ to $d$ in $C$ which does not go through $b$.

\begin{claim} \label{Cl:bc}
No vertex in $N_{i-1}(u)$ is adjacent to both $b$ and $c$.
\end{claim}
\begin{proof}
Suppose there is a vertex $y\in N_{i-1}(u)$ adjacent to both $b$ and $c$. Since $\{x,y,b,c\}$ does not induce $K_4$, $xy\notin E$. If $ya\in E$, $\{a,b,c,x,y\}$ induces a $4$-wheel, contradiction. Hence, $ya\notin E$. We also have $yd\notin E$ by symmetry. We claim that $N_{C}(y)=\{b,c\}$. Suppose that $y$ has some neighbor in $P^*$. If $y$ has exactly one neighbor in $P^*$, then $\{y\}\cup C$ induces an ISK4, contradiction. If $y$ has exactly two consecutive neighbor in $P^*$, then $C\cup \{x,y\}\setminus \{c\}$ induces a prism, contradiction. If $y$ has at least three neighbors in $P^*$, or two neighbors in $P^*$ that are not consecutive, then let $z$ be the one closest to $a$ and $t$ be the one closest to $d$. Then $\{x,y,b\}\cup zPa\cup tPd$ induces an ISK4, contradiction. So $N_{C}(y)=\{b,c\}$. Let $z$ be a vertex in $N_{i-1}(u)$ which has a neighbor in $P^*$ and $P_{yz}$ be an upstairs path of $\{y,z\}$. If $z$ has exactly one neighbor in $C$, then $P_{yz}\cup C$ induces an ISK4, contradiction. If $z$ has exactly two consecutive neighbors in $C$, then $P_{yz}\cup C$ induces a prism, contradiction. If $z$ has at least three neighbors in $C$ or two neighbors in $C$ which are not consecutive, let $t,w$ be the ones closest to $b,c$ in $C$, respectively. Let $Q$ be the path form $t$ to $w$ in $C$ which contains $b$. We have that $P_{yz}\cup Q$ induces an ISK4, contradiction.  
\end{proof}
By Claim \ref{Cl:bc}, let $y,z$ be two distinct vertices in $N_{i-1}(u)$ such that $yb,zc\in E$ and $P_{yz}$ be an upstairs path of $\{y,z\}$. 

\begin{claim}
$xy,xz\in E$.
\end{claim}
\begin{proof}
Suppose $xy\notin E$. Then $xz\notin E$, otherwise $P_{yz}\cup \{x,b,c\}$ induces an ISK4. Let $t\in N_{i-1}(u)$ such that $tx\in E$, let $P_{ty}$ and $P_{tz}$ be an upstairs paths of $\{t,y\}$ and $\{t,z\}$, respectively. If $tb\in E$, then $P_{tz}\cup \{x,b,c\}$ induces an ISK4, contradiction. If $tc\in E$, then $P_{ty}\cup \{x,b,c\}$ induces an ISK4, contradiction. So $N_{xbc}(t)=\{x\}$. By Lemma \ref{Lm:confluence}, let $S$ be a confluence of $\{y,z,t\}$. If $S$ is of Type $1$, $S\cup \{x,b,c\}$ induces an ISK4, contradiction. If $S$ is of Type $2$, $S\cup \{x,b,c\}$ induces a prism, contradiction. Then $xy\in E$. Symmetrically, $xz\in E$.
\end{proof}

\begin{claim} \label{Cl:only}
$N_C(y)=\{b\}$ and $N_C(z)=\{c\}$.
\end{claim}
\begin{proof}
We prove only $N_C(y)=\{b\}$, the other conclusion is proved similarly. First, $ya,yc\notin E$, otherwise $\{y,x,a,b\}$ or $\{y,x,a,c\}$ induces a $K_4$. We also have $yd\notin E$, otherwise $\{x,y,b,c,d\}$ induces a $4$-wheel. If $y$ has some neighbor in $P^*$, let $t$ be the one closest to $a$. In this case, $tPa\cup \{x,y,b\}$ induces an ISK4, contradiction. Hence $N_C(y)=\{b\}$.
\end{proof}

Let $t$ be a vertex in $N_{i-1}(u)$ such that $ta\in E$ and $P_{yt}$ be an upstairs path of $\{y,t\}$. By Claim \ref{Cl:only}, $tb,tc\notin E$. We have $tx\in E$, otherwise $P_{yt}\cup \{x,a,b\}$ induces an ISK4.  Suppose that $N_C(t)=\{a\}$. There exists a confluence $S$ of $\{t,y,z\}$ by Lemma \ref{Lm:confluence}. If $S$ is of Type $1$, $S\cup C$ induces an ISK4, contradiction. If $S$ is of Type $2$, $S\cup \{a,b,c\}$ induces an ISK4, contradiction. Hence, $t$ must have some neighbor in $P\setminus \{a\}$, let $w$ be the one closest to $d$ along $P$ and $P_w$ be the path from $a$ to $w$ in $C$ which contains $b$. 

\begin{claim}
$t$ has some neighbor in $P_{yz}$.
\end{claim}
\begin{proof}
Suppose that $t$ has no neighbor in $P_{yz}$. Because $G|(u\cup N_1(u)\cup\ldots\cup N_{i-2}(u))$ is connected, there exists a path $Q$ from $t$ to some $t'$ such that $Q\setminus \{t\}\subseteq u\cup N_1(u)\cup\ldots\cup N_{i-2}(u)$ and $t'$ is the only vertex in $Q$ which has some neighbor in $P_{yz}$. If $t'$ has exactly one neighbor in $P_{yz}$, then $P_w\cup Q\cup P_{yz}$ induces an ISK4, contradiction. If $t'$ has exactly two consecutive neighbors in $P_{yz}$, then $Q\cup P_{yz}\cup \{a,b,c\}$ induces an ISK4. If $t'$ has at least three neighbors in $P_{yz}$ or two neighbors in $P_{yz}$ which are not consecutive, let $y'$, $z'$ be the one closest to $y$, $z$, respectively, then $Q\cup P_w\cup y'P_{yz}y\cup z'P_{yz}z$ induces an ISK4, contradiction. Then $t$ must have some neighbor in $P_{yz}$.
\end{proof}

Let $y',z'\in P_{yz}$ such that $y'y,z'z\in E$. Since $t\in N_{i-1}(u)$, $N_{P_{yz}}(t)\subseteq \{y,z,y',z'\}$. If $t$ has exactly one neighbor in $P_{yz}$, then $\{t\}\cup P_{yz}\cup P_w$ induces an ISK4, contradiction. If $t$ has exactly two neighbors in $P_{yz}$, then $\{t,a,b,c\}\cup P_{yz}$ induces an ISK4, contradiction. If $t$ has exactly three neighbors in $P_{yz}$, then $\{t,b,c\}\cup P_{yz}$ induces an ISK4, contradiction. Hence, $t$ has four neighbors in $P_{yz}$ or $N_{P_{yz}}(t)= \{y,z,y',z'\}$. In particular, $ty\in E$ and $\{x,t,y,a,b\}$ induces a $4$-wheel, contradiction. Hence, $G|N_i(u)$ is boat-free.

Now, for every $i\geq 1$, $G|N_i(u)\in {\cal C}_1$. By Lemma \ref{Lm:boat-free}, $\chi(G|N_i(u))\leq 6$ . By Lemma \ref{Lm:chromatic-layer}, $\chi(G)\leq 12$, completing the proof.
\end{proof}

\begin{lemma} \label{Lm:K222-free}
Let $G$ be a graph in ${\cal C}_3$. Then $\chi(G)\leq 24$.
\end{lemma}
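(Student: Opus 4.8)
The plan is to reproduce verbatim the doubling scheme of Lemmas~\ref{Lm:boat-free} and~\ref{Lm:4-wheel-free}. Since being ISK4-free, $K_{3,3}$-free and prism-free is inherited by induced subgraphs, it suffices to prove that for every $u\in V(G)$ and every $i\geq 1$ the graph $G|N_i(u)$ is \emph{$4$-wheel-free}: this places $G|N_i(u)$ in ${\cal C}_2$, whence $\chi(G|N_i(u))\leq 12$ by Lemma~\ref{Lm:4-wheel-free}, and then $\chi(G)\leq 24$ follows from Lemma~\ref{Lm:chromatic-layer} (the arithmetic $24=12+12$). The layer $N_1(u)$ is triangle-free, since a triangle in $N(u)$ together with $u$ would be a $K_4$, and a $4$-wheel contains triangles; so I would take $i\geq 2$ and assume for contradiction that $G|N_i(u)$ contains a $4$-wheel with rim a square $C=abcd$ (in this cyclic order) and center $x$ complete to $\{a,b,c,d\}$.

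As in Claim~\ref{Cl:no triangle and C_4}, the core of the argument is to analyse how the vertices of $N_{i-1}(u)$ attach to $W=\{a,b,c,d,x\}$, and then to glue the dominators of the rim together through the upper layers using Lemmas~\ref{Lm:upstair-path} and~\ref{Lm:confluence}. First I would record the two easy restrictions, coming from $K_4$- and $K_{2,2,2}$-freeness: no vertex of $N_{i-1}(u)$ is complete to a triangle $xab$, $xbc$, $xcd$, $xda$ of the wheel, and no vertex $w$ is complete to $\{a,b,c,d\}$ (if $wx\in E$ then $\{w,x,a,b\}$ is a $K_4$, and if $wx\notin E$ then $\{w,x,a,b,c,d\}$ is a $K_{2,2,2}$ with parts $\{w,x\}$, $\{a,c\}$, $\{b,d\}$, against $G\in{\cal C}_3$). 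The most useful clean case is the ``diagonal'' attachment: if $N(y)\cap W=\{a,c\}$ for some $y\in N_{i-1}(u)$, then $\{a,b,c,x,y\}$ is an induced subdivision of $K_4$ with branch vertices $a,b,c,x$ and the non-edge $ac$ replaced by the path $a\,y\,c$; symmetrically for $\{b,d\}$. These two cases, together with the complete-rim case above, are the ones that close immediately.

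Once the attachments are constrained, I would select vertices of $N_{i-1}(u)$ realising the domination of a pair of rim vertices (for the square the natural choice is a diagonal, by analogy with the middle pair $b,c$ handled by Claims~\ref{Cl:bc}--\ref{Cl:only}) and connect them upstairs. A common neighbour of a diagonal, or an upstairs path from a dominator of $a$ to one of $c$ whose interior misses $b$ and $x$, gives $\{a,b,c,x\}$ with $ac$ subdivided, hence an ISK4; a degeneration forcing a single lower vertex complete to the rim reproduces the $K_{2,2,2}$ above. When three pairwise non-adjacent dominators are forced, a confluence from Lemma~\ref{Lm:confluence} finishes it: Type~$1$ together with $x$ and the rim yields an ISK4, and Type~$2$ yields a prism, exactly as in the closing lines of Claim~\ref{Cl:no triangle and C_4}.

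The main obstacle I expect is precisely the remaining \emph{mixed} attachments, which (unlike the diagonal case) produce no forbidden subgraph on $W$ alone and so must be eliminated globally. The sharpest example is $N(y)\cap W=\{a,c,x\}$: then $b$, $d$ and $y$ all have neighbourhood exactly $\{a,c,x\}$, and one checks that $\{a,b,c,d,x,y\}$ contains no induced ISK4, prism, $K_{3,3}$ or $K_{2,2,2}$, so $y$ cannot be discarded locally. Handling such configurations (and the ``two consecutive corners'' and ``three corners'' cases) requires combining two or three vertices of $N_{i-1}(u)$ with the upstairs-path/confluence machinery and tracking which of the four forbidden graphs arises in each branch, in the same painstaking way as the proof of Lemma~\ref{Lm:4-wheel-free} treats the boat. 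That bookkeeping is where the real work lies; the reduction and the colour count are routine.
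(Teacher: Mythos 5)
Your global plan (show each layer $N_i(u)$ is $4$-wheel-free, hence lies in ${\cal C}_2$, then apply Lemmas~\ref{Lm:4-wheel-free} and~\ref{Lm:chromatic-layer} to get $12+12=24$) is exactly the paper's, and you do isolate the decisive contradiction: a vertex of $N_{i-1}(u)$ complete to the rim $\{a,b,c,d\}$ yields a $K_4$ (if adjacent to $x$) or a $K_{2,2,2}$ (if not). But you never actually establish that such a vertex exists. Instead you classify attachments to the whole wheel $W=\{a,b,c,d,x\}$, correctly observe that the mixed type $N(y)\cap W=\{a,c,x\}$ creates no forbidden induced subgraph inside $\{a,b,c,d,x,y\}$, and then defer the elimination of these cases to unspecified bookkeeping that you yourself describe as ``where the real work lies''. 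That elimination is the content of the lemma; leaving it as an acknowledged obstacle is a genuine gap, not a routine omission.

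The idea you are missing is to drop the hub $x$ from the analysis entirely. The rim $abcd$ is a $C_4$ inside $N_i(u)$ and must be dominated by $N_{i-1}(u)$. The $C_4$ part of Claim~\ref{Cl:no triangle and C_4} shows that a $C_4$ in a layer cannot be dominated solely by vertices having one or two neighbours in $\{a,b,c,d\}$, and that argument uses only ISK4-, $K_{3,3}$- and prism-freeness (plus upstairs paths and confluences through the upper layers), all of which are available in ${\cal C}_3$; in particular your troublesome vertex with $N(y)\cap W=\{a,c,x\}$ is simply a vertex with two non-adjacent neighbours in the rim, and it is eliminated there by an upstairs path to a dominator of $d$, with no reference to $x$ at all. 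A vertex with exactly three neighbours in the rim forms, with the rim alone, a $K_4$ having one edge subdivided once, i.e.\ an ISK4. Hence some $y\in N_{i-1}(u)$ is complete to $\{a,b,c,d\}$, and your own $K_4$/$K_{2,2,2}$ dichotomy finishes the proof. With that reduction the lemma takes only a few lines, which is how the paper proves it.
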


\begin{proof}
Let $u\in V(G)$ and $i\geq 1$. We claim that $G|N_i(u)$ contains no $4$-wheel. Suppose that $G|N_i(u)$ contains a $4$-wheel consisting of a hole $abcd$ and a vertex $x$ complete to $abcd$. By similar argument as in the proof of Lemma \ref{Lm:boat-free} (the proof of $C_4$-free), the hole $abcd$ cannot be dominated by only the vertices in $N_{i-1}(u)$ which has one or two neighbors in $abcd$. Hence, there exists some vertex $y\in N_{i-1}(u)$ complete to $abcd$. It is clear that $xy\notin E$, otherwise $\{x,y,a,b\}$ induces a $K_4$. Now, $\{x,y,a,b,c,d\}$ induces a $K_{2,2,2}$, contradiction. So, $G|N_i(u)$ contains no $4$-wheel. By Lemma \ref{Lm:4-wheel-free}, $\chi(G|N_i(u))\leq 12$. By Lemma \ref{Lm:chromatic-layer}, we have $\chi(G)\leq 24$, which proves the lemma.   
\end{proof}

Before the main proof, we have several lemmas proving the bound of chromatic number of some basic graphs.

\begin{lemma}\label{Lm:line-graph}
Let $G$ be the line graph of a graph $H$ with maximum degree three. Then $\chi(G)\leq 4$.
\end{lemma}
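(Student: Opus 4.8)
The plan is to translate the vertex-coloring problem on the line graph $L(H)$ into an edge-coloring problem on $H$. I would first observe that, by definition of the line graph, two vertices of $L(H)$ are adjacent exactly when the corresponding edges of $H$ share an endpoint; hence a stable set of $L(H)$ is precisely a matching of $H$. Consequently a proper $k$-coloring of $G=L(H)$ is the same object as a partition of $E(H)$ into $k$ matchings, that is, a proper edge-coloring of $H$ with $k$ colors. This yields the identity $\chi(G)=\chi'(H)$, where $\chi'(H)$ denotes the chromatic index of $H$.

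With this reformulation the bound is immediate: since $H$ has maximum degree at most three, Vizing's theorem gives $\chi'(H)\le\Delta(H)+1\le 4$. Combining the two facts gives $\chi(G)=\chi'(H)\le 4$, which is exactly the conclusion of the lemma.

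The main point is therefore that, once the reformulation is made, there is essentially no obstacle: the entire content of the statement is the classical bound $\chi'(H)\le\Delta(H)+1$ in the special case $\Delta(H)=3$. If one prefers to avoid invoking Vizing's theorem as an external black box, I would instead reprove its subcubic instance directly, by induction on $|E(H)|$: delete an edge $uv$, four-edge-color $H-uv$ by the induction hypothesis, and extend the coloring across $uv$ using the standard fan/Kempe-chain recoloring (here each of $u$ and $v$ is incident to at most two edges of $H-uv$, so at least two of the four colors are free at each end, which is what drives the recoloring). This is a routine but genuinely Vizing-style argument, so citing Vizing's theorem is by far the cleanest route.
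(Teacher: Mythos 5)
Your proof is correct and follows exactly the same route as the paper: reduce vertex-coloring of $G=L(H)$ to edge-coloring of $H$ and apply Vizing's theorem with $\Delta(H)\leq 3$. The additional sketch of a direct subcubic Vizing argument is fine but unnecessary.
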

\begin{proof}
To prove that $G$ is $4$-colorable, we only need to prove that $H$ is $4$-edge-colorable. But since the maximum degree of $H$ is three, this is a direct consequence of Vizing's theorem (see \cite{BM76}). 
\end{proof}

\begin{lemma}\label{Lm:rich-square}
Let $G$ be a rich square. Then $\chi(G)\leq 4$.
\end{lemma}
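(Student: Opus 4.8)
The plan is to exhibit an explicit proper $4$-colouring built directly from the structure of a rich square. Let $S=\{u_1,u_2,u_3,u_4\}$ be the square (with $u_1u_2u_3u_4$ in this cyclic order), and recall that every component of $G\setminus S$ is a \emph{link} of $S$, that is, a path whose two ends attach to a pair of ``opposite'' edges of the square while no interior vertex has a neighbour in $S$. First I would colour the square itself: since $S$ induces a $C_4$, I set $u_1,u_3$ to colour $A$ and $u_2,u_4$ to colour $B$. The key observation is that, for each of the three admissible link types, every end of the link is adjacent to exactly one $A$-vertex and one $B$-vertex of $S$. Indeed, a single vertex complete to $S$ sees $\{u_1,u_2,u_3,u_4\}$; an end with $N_S=\{u_1,u_2\}$ or $\{u_3,u_4\}$ sees one vertex of each colour; and an end with $N_S=\{u_1,u_4\}$ or $\{u_2,u_3\}$ also sees one of each colour, because $u_1,u_3$ carry $A$ while $u_2,u_4$ carry $B$.

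Next I would colour every link using only two fresh colours $C$ and $D$. Each link is a path, hence bipartite, so it admits a proper $2$-colouring with $\{C,D\}$; in particular both of its ends receive a colour in $\{C,D\}$. By the observation above each end is adjacent in $S$ only to vertices coloured $A$ or $B$, so the edges joining a link to the square are properly coloured. The interior vertices of a link have no neighbour in $S$ at all, and distinct links lie in distinct components of $G\setminus S$, so there are no edges between different links; hence the $2$-colourings of the various links never conflict with each other or with $S$. Combining the colour classes yields a proper colouring of $G$ with the four colours $A,B,C,D$, giving $\chi(G)\le 4$.

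I do not expect a genuinely hard step here: the whole argument reduces to the bookkeeping fact that the square can be $2$-coloured so that every link-end sees both an $A$-vertex and a $B$-vertex, which frees the remaining two colours for the link paths, each of which is colourable independently of the others. The only point that needs care, and which I would regard as the main obstacle to writing cleanly, is verifying this adjacency claim separately for each of the three link types and checking that the single-vertex link (complete to $S$) likewise receives a colour in $\{C,D\}$; once that small case analysis is settled, the disjointness of the components of $G\setminus S$ makes the global colouring immediate.
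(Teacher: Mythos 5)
Your proof is correct and follows essentially the same approach as the paper: an explicit $4$-colouring with two alternating colours on the square $S$ and the remaining colours on the links, using the fact that distinct links are distinct components of $G\setminus S$ and that link interiors have no neighbours in $S$. The only cosmetic difference is that the paper colours the two ends of a link with colours $3$ and $4$ and reuses the square's colours $1,2$ on the link's interior, whereas you properly $2$-colour each entire link path with the two fresh colours; both variants are valid (and your ``exactly one $A$-vertex and one $B$-vertex'' claim, while slightly off for the single-vertex link complete to $S$, is harmless since all that is needed is that link ends meet only $A$- or $B$-coloured vertices of $S$).
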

\begin{proof}
By the definition of a rich square, there is a square $S=\{u_1,u_2,u_3,u_4\}$ in $G$ such that every component of $G\setminus S$ is a link of $S$. We show a $4$-coloring of $G$ as follows. Assign color $1$ to $\{u_1,u_3\}$ and color $2$ to $\{u_2,u_4\}$. Let $P$ be a component of $G\setminus S$ with two ends $p$, $p'$. If $p=p'$, give it color $3$. If $p\neq p'$, give $p$, $p'$ color $3$, $4$, respectively and assign color $1$ and $2$ alternately to the internal vertices of $P$.  
\end{proof}

\begin{proof}[Proof of Theorem \ref{Thm:2}]
We prove the theorem by induction on the number of vertices of $G$. Suppose that $G$ has a clique cutset $K$. So $G\setminus K$ can be partitioned into two sets $X$, $Y$ such that there are no edges between them. By the induction hypothesis, $\chi(G|(X\cup K))$ and $\chi(G|(Y\cup K))\leq 24$, therefore $\chi(G)\leq \max\{\chi(G|(X\cup K)),\chi(G|(Y\cup K))\}\leq 24$. Hence we may assume that $G$ has no clique cutset. If $G$ contains a $K_{3,3}$, then by Lemma \ref{Lm:K33}, $G$ is a thick complete bipartite graph or complete tripartite graph and $\chi(G)\leq 3$. So we may assume that $G$ contains no $K_{3,3}$. 

Suppose that $G$ has a proper $2$-cutset $\{a,b\}$. So $G\setminus \{a,b\}$ can be partitioned into two sets $X$, $Y$ such that there is no edge between them. Since $G$ has no clique cutset, it is $2$-connected, so there exists a path $P_Y$ with ends $a$ and $b$ and with interior in $Y$. Let $G_X'$ be the subgraph of $G$ induced by $X\cup P_Y$. Note that $P_Y$ is a flat path in $G_X'$. Let $G_X''$ be obtained from $G_X'$ by reducing $P_Y$. Define a graph $G_Y''$ similarly. Since $G_X'$ is an induced subgraph of $G$, it contains no ISK4. So, by Lemma \ref{Lm:reducing}, $G_X''$ contains no ISK4. The same hold for $G_Y''$. By induction hypothesis, $G_X''$ and $G_Y''$ admit a $24$-coloring. Since $a$ and $b$ have different colors in both coloring, we can combine them so that they coincide on $\{a,b\}$ and obtain a $24$-coloring of $G$. Now, we may assume that $G$ has no proper $2$-cutset. If $G$ contains a $K_{2,2,2}$ (rich square) or a prism, then by Lemma \ref{Lm:prism,K222}, $G$ is the line graph of a graph with maximum degree $3$, or a rich square. By Lemmas \ref{Lm:line-graph} and \ref{Lm:rich-square}, $\chi(G)\leq 4<24$. Therefore, we may assume that $G$ contains neither prism nor $K_{2,2,2}$. So $G\in {\cal C}_3$ and $\chi(G)\leq 24$ by Lemma \ref{Lm:K222-free}.
\end{proof}

\section{Conclusion}

Not only the bound we found in Theorem \ref{Thm:1} is very close to the one stated in Conjecture \ref{conj:2}, but the simple structure of each layer is also interesting. We believe that it is very promising to settle Conjecture \ref{conj:2} by this way of looking at our class. For Theorem \ref{Thm:2}, we are convinced that the bound $24$ we found could be slightly improved by this method if we look at each layer more carefully and exclude more structures, but it seems hard to reach the bound mentioned in Conjecture \ref{conj:1}.

\subsection*{Acknowledgement} The author would like to thank Nicolas Trotignon for his help and useful discussion.

\bibliographystyle{abbrv}
\bibliography{ISK4}

\end{document}